\newtheorem{Proposition}{Proposition}
\def\bee#1\eee{\begin{align}#1\end{align}}
\def\ee{\end{equation}}
\def\bea{\begin{eqnarray}}
\def\eea{\end{eqnarray}}
\newcommand{\beq}{\begin{eqnarray}}
\newcommand{\eqq}{\end{eqnarray}}
 \newcommand{\badat}{\begin{alignedat}}
 \newcommand{\eadat}{\end{alignedat}}
\newcommand{\eal}[1]{\be \begin{aligned} #1 \end{aligned}\end{equation}} 
\newcommand{\eqn}[1]{\be #1 \end{equation}} 
\newcommand{\eqa}[1]{\bea  #1\end{eqnarray}}
\long\def\new#1\endnew{{\bf #1}}		
\long\def\del#1\enddel{}
\def\del{\partial}
\def\rd{\mathrm{d}}
\definecolor{oldmauve}{rgb}{0.4, 0.19, 0.28}
\definecolor{pansypurple}{rgb}{0.47, 0.09, 0.29}
\definecolor{burgundy}{rgb}{0.5, 0.0, 0.13}
\definecolor{carminepink}{rgb}{0.92, 0.3, 0.26}
\definecolor{blue(pigment)}{rgb}{0.2, 0.2, 0.6}
\definecolor{darkseagreen}{rgb}{0.56, 0.74, 0.56}
\definecolor{darkspringgreen}{rgb}{0.09, 0.45, 0.27}
\definecolor{ceruleanblue}{rgb}{0.16, 0.32, 0.75}
\definecolor{mycolor}{rgb}{0.122, 0.435, 0.698}
\newmdenv[innerlinewidth=0.5pt, roundcorner=4pt,linecolor=mycolor,innerleftmargin=6pt,
innerrightmargin=6pt,innertopmargin=6pt,innerbottommargin=6pt]{bluebox}
 \newcommand{\virg}{\hspace{1 mm}, \hspace{8 mm}}
\newcommand{\p}{\partial}
\newcommand{\be}{\begin{eqnarray}}
\newcommand{\en}{\end{eqnarray}}
\def\rd{\mathrm{d}}
\def\pa{\partial }
\def\bz{{\bar z}}
\def\bd{{\bar d}}
\def\scri{{\mathscr{I}}}
\def\mU{{\mu^{\dot \alpha}}}
\def\la{{\lambda_{\alpha}}}
\def\bla{{\bar \lambda_{\dot \alpha}}}
\newcommand{\ga}{\alpha}
\def\da{{\dot \alpha}}
\def\db{{\dot \beta}}
\newcommand{\as}{\mathrm{s}} 
\newcommand{\edth}{\textit{\dh}}
\newcommand{\h}{\mathbf{h}}
\newcommand{\hT}{\mathbf{\widetilde{h}}}
\newcommand{\TD}{{\bar{\textrm{d}}}}
\author{Laura Donnay and Yannick Herfray}
\numberwithin{equation}{section} 
\begin{document}

\begin{titlepage}
  \thispagestyle{empty}

  \begin{center}  
  
\vspace*{3cm}

{\LARGE\textbf{  Carrollian  $\mathscr Lw_{1+\infty}$ representation }}\\
\vspace{0.5cm}

{\LARGE\textbf{ from twistor space}}

\vskip1cm
Laura Donnay$^\star$\footnote{\fontsize{8pt}{10pt}\selectfont\ \href{mailto:ldonnay@sissa.it}{ldonnay@sissa.it}}, Laurent Freidel$^\dagger$\footnote{\fontsize{8pt}{10pt}\selectfont\ \href{mailto:lfreidel@perimeterinstitute.ca}{lfreidel@perimeterinstitute.ca}}, Yannick Herfray$^*$\footnote{\fontsize{8pt}{10pt}\selectfont\ \href{mailto:yannick.herfray@univ-tours.fr}{yannick.herfray@univ-tours.fr}}

\vskip0.5cm

\normalsize
\medskip

$^\star$\textit{SISSA,
Via Bonomea 265, 34136 Trieste, Italy}

\textit{INFN, Sezione di Trieste,
Via Valerio 2, 34127, Italy \\
\vspace{2mm}
}

\medskip
$^\dagger$\textit{Perimeter Institute for Theoretical Physics,\\
31 Caroline St. N., Waterloo ON, Canada, N2L 2Y5\\
\vspace{2mm}
}

\medskip
$^*$\textit{Institut Denis Poisson UMR 7013, Université de Tours,\\
Parc de Grandmont, 37200 Tours, France\\
\vspace{2mm}
}

\vskip1cm


\begin{abstract} 
We construct an explicit realization of the action of the $\mathscr Lw_{1+\infty}$ loop algebra on fields at null infinity.
This action is directly derived by Penrose transform of the geometrical action of $\mathscr Lw_{1+\infty}$ symmetries in twistor space, ensuring that it forms a representation of the algebra. Finally, we show that this action coincides with the canonical action of $\mathscr Lw_{1+\infty}$ Noether charges on the asymptotic phase space.

\end{abstract}

\vskip2cm

\end{center}

\end{titlepage}

\tableofcontents 
\section{Introduction}
The existence of an infinite hierarchy of conservation laws associated to every complexified self-dual Einstein manifold with zero cosmological constant can be traced back to the work of Plebanski and Boyer~\cite{Boyer:1985aj,Boyer:1983}. These conservation laws are related to the action of singular diffeomorphisms in twistor space, which act as symmetries and whose infinitesimal action is the $\mathscr Lw_{1+\infty}$ algebra. The singular nature of these diffeomorphisms means that they can change the complex structure which, by Penrose's nonlinear graviton construction \cite{Penrose:1976jq,Penrose:1976js}, amounts to deforming a self-dual spacetime into another. In this sense the $\mathscr Lw_{1+\infty}$ algebra acts as symmetry on the space of self-dual Einstein spacetimes. In fact, several constructions in twistor theory have been making use of singular transformations as methods for generating new solutions \cite{Woodhouse:1987,MASON:1988-65,Mason88-Backlund,Woodhouse:1988,Park:1989vq,Park:1989fz,Mason:1990,Mason:1996rf,Dunajski:2000iq}.

\quad The $\mathscr Lw_{1+\infty}$ algebra has recently made a dramatic comeback in the context of the celestial holography program, whose goal is to encode quantum gravity in asymptotically flat spacetimes in terms of a theory living on the celestial sphere. Celestial amplitudes, namely scattering amplitudes recast in a conformal primary basis on the celestial sphere (see \cite{Raclariu:2021zjz,Pasterski:2021rjz,McLoughlin:2022ljp,Donnay:2023mrd} for reviews), exhibit an infinite tower of `conformally soft' graviton theorems, which appear when the conformal dimension of the external graviton takes certain integer values~\cite{Donnay:2018neh,Adamo:2019ipt,Puhm:2019zbl,Guevara:2019ypd}. It was shown that, in the positive helicity sector, the infinite collection of soft graviton currents can be organized into the loop algebra of the wedge algebra of $w_{1+\infty}$ \cite{Strominger:2021lvk} (see also \cite{Guevara:2021abz,Himwich:2021dau,Monteiro:2022lwm,Jiang:2021csc,Adamo:2021lrv,Ball:2021tmb,Mago:2021wje,Ren:2022sws,Bu:2022iak,Monteiro:2022xwq,Hu:2023geb,Banerjee:2023zip,Banerjee:2023jne,Bittleston:2023bzp,Krishna:2023ukw,Mason:2023mti} for related works). In other words, $\mathscr Lw_{1+\infty}$ provides a symmetry organizing principle for the soft sector of celestial CFTs. 

\quad The explicit relation between soft theorems  and the $\mathscr Lw_{1+\infty}$ symmetries of twistor space was realized by T. Adamo, L. Mason and A. Sharma in \cite{Adamo:2021lrv}. The main new ingredient was the use of a new twistor sigma model developed in the last years \cite{Adamo:2021bej,Mason:2022hly,Adamo:2022mev,Adamo:2023zeh}. This new model is based on the theory of asymptotic twistor spaces \cite{Penrose:1972ia,eastwood_tod_1982}, which are closely related to Newman's $\mathcal H$-spaces \cite{Newman:1976gc,Hspace}: These are particular realisations of the nonlinear graviton construction where the deformation of twistor space is parameterized by the gravitational data (i.e. the shear) at null infinity. This point of view, together with their sigma model which permits to probe gravitational amplitudes beyond the self-dual sector, allowed these authors to establish a close connection between scattering amplitudes and the action of the $\mathscr Lw_{1+\infty}$ algebra. In particular, their work made it very clear that in principle the algebra acts on gravitational data. However, as often in twistor theory, the construction is somewhat implicit and the exact realisation of this action was left aside. The main result of our article is that, in this particular instance, one can be particularly explicit about the action of the symmetry and find a closed expression for the action suggested by the work \cite{Adamo:2021lrv}. The second result is that the action can be extended to act on the asymptotic data of any spin.

\quad On another front, while the $w_{1+\infty}$ structure was explicitly related to the celestial operator product expansions of soft gravitons, it was not clear how it could be seen to emerge from a gravitational phase space point of view. To remedy this situation, the works \cite{Freidel:2021dfs,Freidel:2021ytz} proposed a construction of charges associated with a higher-spin tower of symmetries. One of the key properties of these higher-spin charges is that they satisfy a set of recursion relations which, once truncated to quadratic order in the fields, is equivalent to the tower of conformally soft symmetries. This allowed the authors of \cite{Freidel:2021ytz} to provide a canonical realization of the $\mathscr Lw_{1+\infty}$ algebra on the gravitational phase space from the bracket of (a renormalized version of) these higher-spin charges. 
In \cite{Freidel:2023gue, Hu:2023geb} it was further shown that these canonical charges form a  representation of a shifted Schouten-Nijenhuis algebra, which reduces to $\mathscr Lw_{1+\infty}$ under certain holomorphicity conditions of the transformation parameters. We will in fact be able to show that this canonical realization precisely coincides with the twistor action. 

\quad The goal of this work is therefore to provide a direct derivation of the representation\footnote{In other terms, the linear action of $\mathscr Lw_{1+\infty}$ on linearized fields at $\mathscr{I}$.} of $\mathscr L w_{1+\infty}$ symmetries on fields living at null infinity ($\mathscr I$) from its twistor action. We will refer to objects intrinsically living at $\mathscr I$ as `Carrollian fields', following a recent nomenclature (see \cite{Donnay:2022aba,Donnay:2022wvx,Bagchi:2022emh} and references therein for a Carrollian perspective and see \cite{Mason:2022hly,Mason:2023mti} for recent works connecting the Carrollian and twistor perspective). Our derivation will consist of a direct computation where we `bring down to  $\mathscr I$' the action of $\mathscr Lw_{1+\infty}$ symmetries in twistor space. More precisely, we will follow the sequence of steps which are outlined in Fig. \ref{fig:twistor_journey}. 
 \begin{figure}[h!]
\centering
\includegraphics[scale=1]{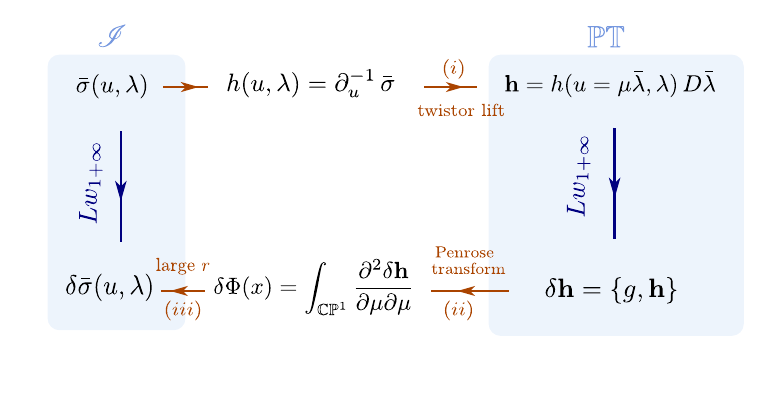}
\captionsetup{width=0.9\linewidth}
\caption{\small{Schematic view of the different steps. One starts at the upper left corner with the shear $\bar \sigma$ at $\mathscr I$ (of coordinates $u,\lambda$) and constructs its uplift in twistor space, $\mathbf h \in \mathbb{P T}$. The action of the loop algebra $\mathscr Lw_{1+\infty}$ of generators $g$ is linear on twistor space and renders $\delta \mathbf h$. A Penrose transform allows to obtain the transformed bulk field $\delta \Phi$, from where a stationary phase space approximation leads to the explicit action of $w_{1+\infty}$ symmetries on the asymptotic shear, $\delta \bar \sigma$.}}
\label{fig:twistor_journey}
\end{figure}
This intricate journey will eventually render a remarkably simple expression for the action of $\mathscr Lw_{1+\infty}$ on the shear which the reader can find in Proposition~\ref{Proposition: Representation1}. What is more, it will automatically ensure that the action forms a representation of the algebra, see Proposition \ref{Proposition: Representation2}. We will also show that our result matches with the canonical action of higher-spin charges found in \cite{Freidel:2021ytz}, thereby unifying the aforementioned results on the appearance of $w_{1+\infty}$ symmetries from  twistor space, celestial CFT, Carrollian and gravitational phase space points of view.\\

\quad This paper is organized as follows. We start in section \ref{sec:set-up} with a presentation of conventions and the key relationships that allow us to go from twistor representatives to Carrollian fields at null infinity and back. In section \ref{sec:action}, we present our main results, namely the explicit realization of the action of $\mathscr Lw_{1+\infty}$ symmetries on the gravitational shear, and that the latter forms a representation of the algebra. We then show in section~\ref{sec:charges} that our expressions derived from twistor space coincide with the canonical action that was previously obtained from a gravitational phase space perspective. Section \ref{sec:proof} contains the detailed steps of the proof of Proposition~\ref{Proposition: Representation1}.

 \section{From null infinity to twistor space and back}
 \label{sec:set-up}
 In this section, we set up our notations and detail the journey of a Carrollian field of helicity $\pm 2$ at $\mathscr I$ to its uplift to twistor space, and back; see Fig. \ref{fig:twistor_trivial_journey}. The first step consists in mapping the shears $\bar \sigma$ (resp. $\sigma)$ to their twistor representatives $\h$ (resp. $\hT$). Applying the corresponding Penrose transform renders a field in the bulk $\Phi(x)$, from which one can recover the initial field at $\mathscr I$ from an asymptotic (large $r$) expansion. The consistency of this trivial journey will ensure that our conventions are consistent and prepare the ground for the action of the $\mathscr Lw_{1+\infty}$ algebra.

\begin{figure}[h!]
\centering
\includegraphics[scale=1]{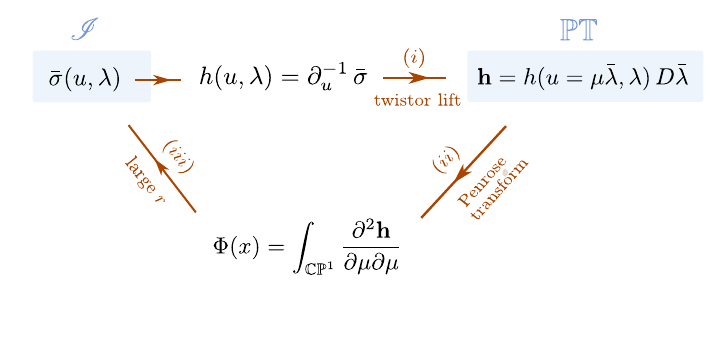}
\captionsetup{width=0.9\linewidth}
\vspace{0.4cm}
\caption{\small{Schematic representation of the circular journey (here depicted for the case of positive helicity) with the identification of  $\bar \sigma$, $\h$ and $\Phi$ and their relationships.}}
\label{fig:twistor_trivial_journey}
\end{figure}

\subsection{Bondi coordinates and spinors}
\label{subsec:conventions}

Our conventions for the contractions of spinors with the Levi-Civita symbol $\epsilon^{\alpha\beta}$, with $\epsilon^{0 1}=1$, are $\langle a b \rangle  =\epsilon^{\alpha \beta}a_\beta b_\alpha = a^\alpha b_\alpha $, $[ \tilde a \tilde b]  =\epsilon^{\da \dot \beta}\tilde a_{\dot \beta} \tilde b_{\dot \alpha} = \tilde a^{\dot \alpha} \tilde b_{\dot \alpha}$ and, in particular, we will make important use of the following
\begin{align*}
    \lambda^{\alpha} &:= \begin{pmatrix}  z \\ -1 \end{pmatrix} & 
    n^{\alpha} &:= \begin{pmatrix} 1 \\ 0 \end{pmatrix}&
    \langle n \lambda \rangle &=1&
    d\lambda^{\alpha} &= dz \,n^{\alpha} & 
   D\lambda := \langle \lambda d\lambda \rangle &= - dz.
\end{align*}

Bondi coordinates are chosen as $(u,r,\lambda_\alpha,\bar \lambda_\da)$ together with a null vector $n^{\alpha \da} = n^{\alpha}\bar n^{\da}$
\begin{align}
    \lambda_\alpha&=(1, \,z), & n^{\alpha \dot \alpha}&= \begin{pmatrix}
        1 & 0 \\ 0 & 0
    \end{pmatrix},
\end{align}
such that Minkowski space $\mathbb{M}$ is parametrized by
\begin{equation}\label{Preamble: Bondi coordinates}
   x^{\alpha \da}= u \, n^{\alpha \da}+ r\, \lambda^\alpha \,\bar\lambda^\da \quad \in \mathbb{M}
\end{equation}
with flat metric
\begin{equation}\label{flat_metric}
    d x^{\alpha \da} d  x_{\alpha \da} = 2 du dr - 2 r^2 dz d \bar z\,.
\end{equation}
We denote by $\mathcal{C}^{\infty}_{k, \bar k}(\mathscr{I})$ the space of Carrollian fields at $\mathscr{I}$ of weight $(k , \bar k)$. They are represented interchangeably : (i) either by functions  $\phi(u, z, \bar z)$ on $\mathbb{R}\times S^2$ with the prescribed transformation law (see e.g. \cite{Donnay:2022wvx,Bagchi:2022emh})
\begin{equation}
    \delta_{\xi} \phi(u,z,\bar z) = \Bigg( \mathcal{T} + \frac{u}{2}\left(\partial \mathcal{Y} + \bar\partial \bar{\mathcal{Y}} \right)\Bigg)\partial_u \phi(u,z,\bar{z}) +  \Bigg( \mathcal{Y}\partial + \bar{\mathcal{Y}}\bar{\partial}  + k \partial \mathcal{Y} + \bar k \bar\partial \bar{\mathcal{Y}} \Bigg) \phi(u,z,\bar{z}) 
\end{equation}
under an element\footnote{Here and everywhere in this article $\partial$ stands for the partial derivative $\frac{\partial}{\partial z}$ while $\bar \partial$ stands for $\frac{\partial}{\partial \bar z}$. To avoid potential confusion, the Dolbeault operator on twistor space will be denoted $\TD$.} $\xi = \left( \mathcal{T} + \frac{u}{2}\left(\partial \mathcal{Y} + \bar\partial \bar{\mathcal{Y}} \right)\right)\partial_u + \mathcal{Y} \partial + \bar{\mathcal{Y}} \bar\partial $ of the (extended) BMS algebra \cite{Barnich:2010eb}, or by (ii) functions $\phi(u, \lambda_{\alpha}, \bar \lambda_{\da})$ on\footnote{$(u,\lambda, \bar \lambda) \sim (|b|^2 u , b\lambda, \bar b \lambda)$ then stand for homogeneous coordinates on $\scri$.} $\mathbb{R}\times \mathbb{C}^2$ with the following homogeneity \cite{eastwood_tod_1982}
\begin{equation}
    \phi\left(|b|^{ 2} u, b \lambda_{\alpha}, \bar{b}\bar \lambda_{\da}\right) = b^{-2k}\bar{b}^{-2 \bar k} \phi\left(u,  \lambda_{\alpha}, \bar \lambda_{\da}\right)
\end{equation}
under multiplication by a non zero complex number $b\in \mathbb{C}$. These are two different realisations of the same Carrollian field. The identification is explicitly given by the fact that they define the same tensorial field $\Phi \in (\Omega^{(1,0)})^k\otimes(\Omega^{(0,1)})^{\bar k}$ on $\mathscr{I}$,
\begin{equation}
    \Phi = \phi(u,z,\bar z) (-dz)^{k} (-d\bar z)^{\bar k} = \phi(u,\lambda_{\alpha}, \bar \lambda_{\da}) (D\lambda)^k (\bar D \bar\lambda)^{\bar k}.
\end{equation}

\quad The null momenta $p^{\alpha \da}$ for a particle heading towards the point $\zeta_\alpha=(1, \zeta)$,  $\bar \zeta_\da=(1, \bar \zeta)$ on the celestial sphere will be written as
\begin{equation}
    p^{\alpha \da}= \omega q^{\alpha \da}(\zeta, \bar \zeta) = \omega \zeta^{\alpha}\bar \zeta^{\da}\,,
\end{equation}
 and we define the polarization tensor as
\begin{align}\label{Preamble: Polarisation tensor}
   \epsilon^{(+)}_{\alpha\da}&= \frac{\iota_{\alpha} \bar\zeta_{\da}}{\langle \iota \zeta \rangle},  & \epsilon^{(-)}_{\alpha\da}&= \frac{\zeta_{\alpha} \bar \iota_{\dot \alpha}}{[\bar \zeta \bar \iota ]},
\end{align}
where $\iota^\alpha$ is an unspecified reference spinor corresponding to residual gauge freedom. From \eqref{Preamble: Bondi coordinates} we also deduce that
\begin{equation}\label{Preamble: partial z expression}
   \bar  \partial = r \lambda^{\alpha} \bar n^{\dot\alpha} \frac{\partial}{\partial x^{\alpha\da}}
\end{equation}
 and that the contraction with the polarization tensor is
 \begin{align}
  \bar\p \lrcorner \epsilon^{(+)} =
      \bar \partial^{\alpha\da}\epsilon^{(+)}_{\alpha\da} &= r \frac{ \langle \lambda \iota \rangle }{\langle \iota \zeta \rangle} , & \bar\p \lrcorner \epsilon^{(-)} =\bar \partial^{\alpha\da} \epsilon^{(-)}_{\alpha\da}  &= r \frac{ \langle \lambda \zeta \rangle [ \bar n\, \bar \iota] }{[\bar \zeta \bar \iota ]}.
 \end{align}
In the same vein, we note that
\begin{equation}
\begin{alignedat}{2}
  x^{\alpha \da} q_{\alpha \da} &= u  +  r \langle \lambda \zeta\rangle [\bar \lambda \bar \zeta]
  =  u + r |z-\zeta|^2.
\end{alignedat}
\end{equation}

\subsection{A trivial journey}
\label{subsec:trivial journey}
We now detail the round trip from $\mathscr I$ to twistor space and back as depicted in Fig. \ref{fig:twistor_trivial_journey}, namely we check that one can recover the asymptotic shear from its uplift to twistor space by applying a Penrose transform, followed by a stationary phase space approximation. These are coordinate invariant operations and we will work with the coordinate system \eqref{flat_metric} for concreteness.  Following the notations and conventions of \cite{Donnay:2022wvx},
we start with the Carrollian representative (shear)\footnote{The expression below really corresponds to the projection of $C_{AB}$ along the null sphere frame in Bondi coordinates, which relates to Newman-Penrose's shear as $\bar \sigma^{\text{NP}} = \frac{1}{2}\bar \sigma^{\text{here}}$. }
\begin{equation}\label{sigmabar}
\badat{2}
&\bar \sigma (u,\lambda,\bar  \lambda)=-\frac{ i\kappa}{8\pi^2}\int_0^\infty d\omega\,\left (a_-(\omega,\lambda,\bar  \lambda)e^{-i\omega u} - a_+^{\dagger}(\omega,\lambda,\bar  \lambda)e^{i\omega u} \right)\,,
\eadat
\end{equation}
which encodes the self-dual radiative degrees of freedom \cite{sachs_gravitational_1961,Newman:1961qr,Sachs1962OnTC,Bondi:1962px}, together with its opposite helicity counterpart,
\begin{equation}\label{sigma}
\badat{2}
&\sigma (u,\lambda,\bar  \lambda)=-\frac{ i\kappa}{8\pi^2}\int_0^\infty d\omega\,\left (a_+(\omega,\lambda,\bar  \lambda)e^{-i\omega u} - a_-^{\dagger}(\omega,\lambda,\bar  \lambda)e^{i\omega u} \right)\,,
\eadat
\end{equation}
which encodes the anti-self-dual radiative degrees of freedom. Here, $\kappa=\sqrt{32\pi G}$ and $a$, $a^\dagger$ are annihilation and creation operators in momentum basis. The fields' commutation relations are given by
\bee
\badat{2}
&[a_\alpha(\omega,\lambda, \bar\lambda),a^{\dagger}_{\alpha'}(\omega',\lambda', \bar\lambda') ]
= 16\pi^3 \delta_{\alpha,\alpha'}\omega^{-1} \delta(\omega-\omega') \delta(z-z'),\cr
&[\sigma(u,z,\bar{z}), \bar\sigma(u',z',\bar{z}')] = -i \frac{\kappa^2}{4} \mathrm{sign}(u-u') \delta(z-z').
\eadat
\eee

\paragraph{$i)$ Lift to twistor space $\mathscr I \to \mathbb{PT}$}\mbox{}\\

\vspace{-0.5cm}
 Let us start with the positive helicity Carrollian field \eqref{sigmabar}. We introduce, following \cite{eastwood_tod_1982,Adamo:2021lrv},
\begin{equation}
 h (u,\lambda,\bar  \lambda)=
 \p_u^{-1} \bar \sigma (u,\lambda,\bar  \lambda)=\frac{\kappa}{8\pi^2}\int_0^\infty \frac{d\omega}{\omega}\left ( e^{-i\omega u} \,a_-(\omega,\lambda,\bar  \lambda) + e^{i\omega u}  a_+^{\dagger}(\omega,\lambda,\bar  \lambda) \right).
\end{equation}
where $\p_u^{-1} = \int^u du\,$ acts on plane waves as $\p_u^{-1} e^{i\omega u} = \frac1{i\omega} e^{i\omega u}$.

The uplift to twistor space of the Carrollian representative \eqref{sigmabar} then is 
\begin{equation}\label{Trivial journey: Twistor positive representative}
\badat{2}
 \mathbf{h} ( Z^A, \bar Z^{ A})&= h\left(\mU \bla,\lambda,\bar \lambda\right) D\bar \lambda\\
 &=-\frac{\kappa}{8\pi^2}\int_0^\infty \frac{d\omega}{\omega}\left ( e^{-i\omega \mU \bla} \,a_-(\omega,\lambda,\bar  \lambda) + e^{i\omega \mU \bla}  a_+^{\dagger}(\omega,\lambda,\bar  \lambda) \right) d\bar z
 \eadat
\end{equation}
where $D\bar \lambda:=[\bar \lambda d\bar \lambda ]$ and twistor coordinates are
$Z^A=\left( \mU, \la \right) \in \mathbb C^4$. The weights of $\bar \sigma$ ensures that $\mathbf{h}$ is homogeneous of degree $(2,0)$ in the twistor coordinates $\left( Z^A, \bar Z^{A}\right)$. Indeed, $\bar \sigma$ has Carrollian weights $(k,\bar k)=(-\frac{1}{2},\frac{3}{2})$ i.e.
\begin{equation}
\badat{2}
\bar \sigma (|b|^2 u,b\lambda_\alpha,\bar b \bar \lambda_\da)=b^{-2k}\bar b^{-2\bar k}\bar \sigma (u,\lambda_\alpha,\bar \lambda_\da) =\frac{b}{\bar b^3}\bar \sigma (u,\lambda_\alpha,\bar \lambda_\da)\,,
 \eadat
\end{equation}
for any non-vanishing complex number $b$, and hence 
\begin{equation}
\badat{2}
\mathbf{h}(b Z^A, \bar{b}\bar Z^{A})=b^2 \mathbf{h}(Z^A, \bar Z^A)\,.
 \eadat
\end{equation}
One can also check that the twistor representative $\mathbf{h} \in \Omega^{0,1}(\mathbb{PT}, \mathcal{O}(2))$  is holomorphic in twistor space, $ \TD \h =0$, where $\TD  := \rd\bar{Z}^A \frac{\partial}{\partial \bar{Z}^{A}}$.

\quad For the negative helicity field \eqref{sigma} of Carrollian weights $(\frac{3}{2}, -\frac{1}{2})$, we introduce instead
\begin{equation}\label{Trivial journey: Twistor negative representative bis}
\tilde h (u,\lambda,\bar  \lambda):=\p_u^3  \sigma (u,\lambda,\bar  \lambda)= \frac{\kappa}{8\pi^2}\int_0^\infty d\omega \,\omega^3\left ( e^{-i\omega u} \,a_+(\omega,\lambda,\bar  \lambda) + e^{i\omega u}  a_-^{\dagger}(\omega,\lambda,\bar  \lambda) \right)\,,
\end{equation}
and the following uplift to twistor space
\begin{equation}\label{Trivial journey: Twistor negative representative}
\badat{2}
 \mathbf{\widetilde h} (Z^A, \bar Z^{A})&=\tilde h\left(\mU \bla,\lambda,\bar \lambda\right) D \bar \lambda\\
 &=-\frac{\kappa}{8\pi^2}\int_0^\infty d\omega \,\omega^3\left ( e^{-i\omega \mU \bla} \,a_+(\omega,\lambda,\bar  \lambda) + e^{i\omega \mU \bla}  a_-^{\dagger}(\omega,\lambda,\bar  \lambda) \right) d\bz\,.
 \eadat
\end{equation}
The weights of $\sigma$ imply that the twistor representative  $\hT \in \Omega^{0,1}(\mathbb{PT}, \mathcal{O}(-6))$ is homogeneous of degree $(-6,0)$ in the twistor coordinates $Z^A, \bar{Z} ^A$. It also satisfies $\TD\hT =0$.

\quad To summarize, we have two linear maps (one for each helicity) 
\begin{align}
\badat{2}
\mathbf{T}^{+2} &\left|\begin{array}{ccc}
     \mathcal{C}^{\infty}_{\left(-\frac{1}{2} , \frac{3}{2} \right)}(\mathscr{I})& \to &  \Omega^{0,1}(\mathbb{PT}, \mathcal{O}(2)) \\
    \bar \sigma & \mapsto & \h
\end{array}\right. \\\\
\mathbf{T}^{-2} &\left|\begin{array}{ccc}
     \mathcal{C}^{\infty}_{\left(\frac{3}{2} , -\frac{1}{2} \right)}(\mathscr{I})& \to &  \Omega^{0,1}(\mathbb{PT}, \mathcal{O}(-6)) \\
    \sigma & \mapsto & \hT
\end{array}\right.
 \eadat
\end{align}
lifting the shear to the corresponding twistor representatives. For a general Carrollian field $\phi(u,z,\bar z)$ of weight $(k,\bar k) = \left(\frac{1- \as}{2} , \frac{1+\as}{2} \right)$ we have a map
\begin{align}\label{Trivial journey: general twistor lift}
\badat{2}
\mathbf{T}^{\as}&\left|\begin{array}{ccc}
     \mathcal{C}^{\infty}_{\left(\frac{1-\as}{2} , \frac{1+\as}{2} \right)}(\mathscr{I})& \to &  \Omega^{0,1}(\mathbb{PT}, \mathcal{O}(2\as-2)) \\
    \phi & \mapsto & \mathbf{f} = f \bar D \lambda
\end{array}\right.
 \eadat
\end{align}
given by $f(Z^A,\bar Z^A)  := \Big(\partial_u^{1-\as}\phi\Big) ( u= \mu^{\da}\bar \lambda_{\da}, \lambda,\bar \lambda)$.

\paragraph{$ii)$ Penrose transform: $\mathbb{PT} \to \mathbb M$}\mbox{}\\

\vspace{-0.5cm}
Functions $f(Z^{A}, \bar Z ^B)$ on $\mathbb{T}$ which are homogeneous degree $k$ in the holomorphic twistor coordinate $Z^A$ i.e.
\begin{equation}
    f( b Z^{A},  \bar b \bar Z ^B) = b^{k} f( Z^{A},  \bar Z ^B)
\end{equation}
correspond to sections of the holomorphic bundle $\mathcal{O}(k) \to \mathbb{PT}$. Twistor representatives are given by $\TD$ closed (but not exact) $(0,1)$-forms $\mathbf{f} \in \Omega^{0,1}(\mathbb{PT}, \mathcal{O}(2 \as-2))$. The Penrose transform then identifies the corresponding cohomology class with massless fields of helicity $\as \in \mathbb{Z}$ (see e.g. \cite{Penrose1969SolutionsOT,Eastwood:1981jy,Woodhouse:1985id,Baston:1989vh,Ward_Wells_1990,Adamo:2017qyl}):
\begin{equation}\label{Trivial journey: Penrose transform theorem}
    \begin{array}{ccc}
 \Big \{ \text{zero rest mass fields on  $\mathbb{M}_{\mathbb C}$ of helicity $\as$} \Big \} & \simeq & H^{0,1}\left(\mathbb{PT}, \mathcal{O}(2 \as-2)\right)\,.
    \end{array}
\end{equation}
In particular, from our twistor representative $\mathbf h \in \Omega^{0,1}(\mathbb{PT}, \mathcal{O}(2))$, given by \eqref{Trivial journey: Twistor positive representative}, one can recover the corresponding positive helicity $2$ fields as\footnote{The Weyl tensor would have been obtained as $\Psi_{\dot \alpha  \dot\beta \dot\gamma \dot\delta}(x)=\frac{i}{2\pi}\int_{\mathbb{CP}^1} \langle \zeta d\zeta\rangle\wedge \,
\frac{\partial^4 \mathbf h}{\partial \mu^{\da} \partial \mu^{\db} \partial \mu^{\dot \gamma}  \partial \mu^{\dot  \delta}}
(\mU=x^{\alpha \da}\zeta_\alpha,\zeta_\alpha)$. Note the sign difference which ultimately mirror the fact that $\Psi^0_4 = - \partial_u^2 \bar \sigma $.}

\begin{align}\label{Trivial journey: spacetime field}
\Phi_{\alpha \dot \alpha \beta \db}(x)&=\frac{1}{2\pi i}\int_{\mathbb{CP}^1} \langle \zeta d\zeta\rangle\wedge \frac{\iota_\alpha \iota_\beta}{\langle \iota \zeta\rangle^2}\,
\frac{\partial^2 \mathbf h}{\partial \mU  \partial \mu^{\dot \beta}}
(\mU=x^{\alpha \da}\zeta_\alpha,\zeta_\alpha)\\
&= \frac{\kappa i}{16\pi^3}\int_{\mathbb{CP}^1} \langle \zeta d\zeta \rangle \wedge [\bar \zeta d\bar \zeta] \;\frac{\iota_\alpha \iota_\beta \bar \zeta_{\dot \alpha} \bar \zeta_{\dot \beta}}{\langle \iota \zeta\rangle^2}\,
\int_0^\infty \omega d\omega\left( e^{-i\omega x^{\alpha\da} \zeta_{\alpha}\bar \zeta_{\da}} \,a_-(\omega,\zeta,\bar \zeta) + e^{i\omega x^{\alpha\da} \zeta_{\alpha}\bar \zeta_{\da}}  a_+^{\dagger}(\omega,\zeta,\bar \zeta) \right)  \nonumber\\
&= \frac{i \kappa}{16\pi^3} \int_0^\infty \omega d\omega  \int_{\mathbb{CP}^1} d\zeta d\bar \zeta \;  \epsilon_{\alpha \dot \alpha \beta \dot \beta }^{(+)}(\zeta,\bar \zeta) \left( e^{-i\omega x^{\alpha\da} \zeta_{\alpha}\bar \zeta_{\da}} \,a_-(\omega,\zeta,\bar \zeta) + e^{i\omega x^{\alpha\da} \zeta_{\alpha}\bar \zeta_{\da}}  a_+^{\dagger}(\omega,\zeta,\bar \zeta) \right)\,. \nonumber
\end{align}
Here, in order not to confuse it with the spacetime coordinate $\lambda_{\alpha}= (1,z)$, the integration variable has been taken to be $\zeta_{\alpha}= (1,\zeta)$.\\

\quad The negative helicity linearized Weyl tensor is recovered from the twistor representative $\hT\in \Omega^{0,1}(\mathbb{PT}, \mathcal{O}(-6))$, given by \eqref{Trivial journey: Twistor negative representative}, as:
\begin{align}\label{Psineg}
\overline{\Psi}_{\alpha  \beta \gamma \delta}(x)&=\frac{i}{2\pi}\int_{\mathbb{CP}^1} \langle \zeta d\zeta\rangle \,\zeta_\alpha \zeta_\beta \zeta_\gamma \zeta_\delta\,
\mathbf {\widetilde {h}}
(\mU=x^{\alpha \da}\zeta_\alpha,\zeta_\alpha)\\
&=\frac{i \kappa}{16\pi^3} \int_0^\infty \omega^3 d\omega  \int_{\mathbb{CP}^1} d\zeta d\bar \zeta \;  \zeta_\alpha \zeta_\beta \zeta_\gamma \zeta_\delta \left( e^{-i\omega x^{\alpha\da} \zeta_{\alpha}\bar \zeta_{\da}} \,a_+(\omega,\zeta,\bar \zeta) + e^{i\omega x^{\alpha\da} \zeta_{\alpha}\bar \zeta_{\da}}  a_-^{\dagger}(\omega,\zeta,\bar \zeta) \right)\,.\nonumber
\end{align}
\vspace{0.1cm}

\paragraph{$iii)$ Large $r$ limit: $\mathbb M \to \mathscr I$}\mbox{}\\

\vspace{-0.5cm}
 Contracting \eqref{Trivial journey: spacetime field} with $(\partial_{\bar z})^{\alpha \da}$ and making use of \eqref{Preamble: partial z expression}, we obtain 
\begin{align}
\Phi_{\bar z \bar z}(x) &=\frac{r}{2\pi i}\int_{\mathbb{CP}^1} \langle \zeta d\zeta\rangle\wedge \frac{ \langle \iota \lambda  \rangle^2 }{\langle \iota \zeta \rangle^2}\, n^{\alpha}n^{\beta}
\frac{\partial^2 \mathbf h}{\partial \mU  \partial \mu^{\dot \beta}}
(\mU=x^{\alpha \da}\zeta_\alpha,\zeta_\alpha)\\
&=\frac{ \kappa r^2 }{8\pi^3} \int_0^\infty \omega d\omega  \int_{\mathbb{CP}^1} \frac{i}{2}d\zeta d\bar \zeta \;  \frac{ \langle \iota \lambda  \rangle^2 }{\langle \iota \zeta \rangle^2} \left( e^{-i\omega x^{\alpha\da} \zeta_{\alpha}\bar \zeta_{\da}} \,a_-(\omega,\zeta,\bar \zeta) + e^{i\omega x^{\alpha\da} \zeta_{\alpha}\bar \zeta_{\da}}  a_+^{\dagger}(\omega,\zeta,\bar \zeta) \right).\nonumber
\end{align}

Taking the large $r$ limit and making use of the saddle point approximation or the identity $e^{\mp i\omega x^{\alpha\da} \zeta_{\alpha}\bar \zeta_{\da}}=  \mp \frac{\pi i}{\omega r} e^{\mp i\omega u}\delta(z-\zeta) +\mathcal O(r^{-2}) $, one recovers\footnote{Note the $\delta$-function normalisation $\int\frac{i}{2}d\zeta d\bar \zeta \, \delta(\zeta) =1.$}  the asymptotic shear \eqref{sigmabar} from which we started, namely
\begin{align}
\badat{2}
\bar \sigma(u, \lambda, \bar \lambda)& = \lim_{r \to \infty} r^{-1}\Phi_{\bar z \bar z}(x) \\
&=-\frac{i\kappa }{8\pi^2} \int_0^\infty  d\omega   \left( e^{-i\omega u} \,a_-(\omega, \lambda, \bar \lambda) - e^{i\omega u}  a_+^{\dagger}(\omega, \lambda, \bar \lambda) \right)\,,
\eadat
\end{align}
as expected.

\quad Similarly, for the opposite helicity, we contract \eqref{Psineg} with $n^{\alpha}$ to obtain $\overline{\Psi}_4$
\begin{align}
\badat{2}
\overline{\Psi}_4(x) &= n^{\alpha} n^{\beta} n^{\gamma} n^{\delta}  \overline{\Psi}_{\alpha  \beta \gamma \delta}(x)= \frac{i}{2\pi}\int_{\mathbb{CP}^1} \langle \zeta d\zeta\rangle \wedge\,
\mathbf {\widetilde {h}}
(\mU=x^{\alpha \da}\zeta_\alpha,\zeta_\alpha)\\
&=\frac{\kappa}{8\pi^3} \int_0^\infty \omega^3 d\omega  \int_{\mathbb{CP}^1} \frac{i}{2}d\zeta d\bar \zeta \; \left( e^{-i\omega x^{\alpha\da} \zeta_{\alpha}\bar \zeta_{\da}} \,a_+(\omega,\zeta,\bar \zeta) + e^{i\omega x^{\alpha\da} \zeta_{\alpha}\bar \zeta_{\da}}  a_-^{\dagger}(\omega,\zeta,\bar \zeta) \right)\,,
 \eadat
\end{align}
and recover $\sigma(u,\lambda,\bar \lambda)$ as given in \eqref{sigma} from \cite{Newman:1962cia,Newman:1968uj}
\begin{align}
\badat{2}\label{psi0}
\partial_u^2\sigma(u,\lambda,\bar \lambda) &= -\overline{\Psi}_4^0(u,\lambda,\bar \lambda) = 
-\lim_{r \to \infty} r\overline{\Psi}_{4}(x)\\
&=\frac{i\kappa}{8\pi^2} \int_0^\infty \omega^2 d\omega \; \left( e^{-i\omega u} \,a_+(\omega,\zeta,\bar \zeta) - e^{i\omega u}  a_-^{\dagger}(\omega,\zeta,\bar \zeta) \right)\,,
 \eadat
\end{align}
as it should.

\vspace{0.4cm}
\section{Carrollian representation of the $\mathscr Lw_{1+\infty}$ algebra}
\label{sec:action}

\subsection{Representation of the $\mathscr Lw_{1+\infty}$ algebra}

We now turn to the action of $\mathscr Lw_{1+\infty}$ algebra on Carrollian fields living at $\mathscr I$.

\paragraph{The $\mathscr Lw_{1+\infty}$ algebra} \mbox{}\\
\vspace{-0.5cm}

The generators $g(Z^A , \bar Z^A)$ of the $\mathscr Lw_{1+\infty}$ algebra, as realized in twistor space, are the functions on $\mathbb{T}$ of homogeneity degree 2 (in $Z^{A}=(\mu^{\da},\lambda_{\alpha})$) such that
 \begin{equation}\label{Representation: generators_g}
\badat{2}
g =\underbrace{g_{0}(z)}_{n=0}+\underbrace{g_{\da} (z)\mu^{\da}}_{n=1}+\underbrace{g_{\da(2)} (z)\mu^{\da(2)}}_{n=2}+\dots
 \eadat
\end{equation}
with
\begin{equation}\label{Representation: Laurent expansion}
g_{\da(n)}(z) = \sum_{k=-\infty}^{+\infty}  g_{\da(n)}^{(k)} z^k
\end{equation}
some holomorphic functions on $\mathbb{C}^* = \mathbb{C} \setminus \{0\}$.  The generators are decomposed into polynomials $g_{\da(n)}(z) \mu^{\da(n)}$ of degree $n \in \mathbb{N}$ on the plane of coordinates $\mu^\da=(\mu^{\dot 0},\mu^{\dot 1})$.  Here and everywhere in this article we make use of the notation (commonly used in the higher-spin literature) $\alpha(n) := (\alpha_1 \alpha_2 \dots \alpha_n)$ for $n$ symmetrized indices and similarly $\mu^{\da(n)} = \mu^{\da_1}\dots\mu^{\da_n}$. 

\qquad The $\mathscr Lw_{1+\infty}$ algebra is explicitly realized through the holomorphic Poisson bracket $\epsilon = \epsilon^{\da\db}\frac{\partial}{\partial \mu^{\da}} \frac{\partial}{\partial \mu^{\db} }$ \cite{Adamo:2021lrv}
\begin{equation}\label{Carrollian represetation: twistor algebra}
    \{g_1, g_2\} := \epsilon^{\da\db}\frac{\partial g_1}{\partial \mu^{\da}} \frac{\partial g_2}{\partial \mu^{\db} }\,.
\end{equation}
It can be alternatively expressed in terms of the modes
\begin{equation}
  w^p_m:=(\mu^{\dot 0})^{p+m-1}(\mu^{\dot 1})^{p-m-1} \virg |m|\leq p-1\,,
\end{equation}
with $p = \frac{n+2}{2}$ as (see e.g. \cite{Strominger:2021lvk} in the celestial literature)
\begin{equation}
  \{w^p_m,w^q_n\}=2( m(q-1)-n(p-1)) w^{p+q-2}_{m+n}\,.
\end{equation}

\paragraph{Carrollian fields} \mbox{}\\
\vspace{-0.5cm}

A Carrollian field $\phi(u,z,\bar z)\in \mathcal{C}_{(k, \bar k)}^{\infty}(\mathscr{I})$ of weight ${(k, \bar k)} = (\frac{1-\as}{2}, \frac{1+\as}{2})$ lifts, through the map \eqref{Trivial journey: general twistor lift}, to an holomorphic form  $\mathbf{f}= f \bar{D} \bar{\lambda} \in \Omega^{0,1}(\mathbb{PT}, \mathcal{O}(2\as-2))$ in twistor space. The action of the $\mathscr Lw_{1+\infty}$ algebra is then given by (see~\cite{Adamo:2021lrv}\footnote{In this reference the symmetry is thought of as a deformation of the complex structure, which yields an extra inhomogeneous term $\TD g$. In this article, we restrict ourselves to the linear action on the linearized fields. This action will induce a representation on the Carrollian fields and, by construction, this excludes an inhomogeneous (soft) shift.}) 
\bee \label{Representation: action}
\delta_{g} \mathbf{f} = \{g,\mathbf{f}\}  = \{g,f\}  \bar D \bar \lambda.
\eee 
Since $g$ is generically singular at $z=0$ and $z=\infty$, the resulting twistor field $\delta_{g} \mathbf{f}(\mu^{\da}, \lambda_{\alpha})$ will be singular. This is a problem as it might render the Penrose transform ill-defined or generate a singularity in the resulting field. It will thus be useful to restrict ourselves to Carrollian fields with support inside an annulus
\begin{equation*}
    \mathcal{A} = \left\{ z\in \mathbb{C} \quad \text{s.t.}\quad \frac{1}{R} < |z| < R \right\}
\end{equation*}
where $R$ is some fixed number that can be taken as large as we want. We will denote such fields $\phi(u,z,\bar z) \in \mathcal{C}_{(k, \bar k)}^{\infty}(\mathscr{I}_{\mathcal{A}})$. These are such that, for example, \eqref{Representation: action} is regular on the whole of $S^2$. A perhaps more physical justification for this space of fields is the following: as was pointed out in \cite{Adamo:2021lrv}, due to Dolbeault-Cech equivalence of cohomology, the generator \eqref{Representation: generators_g} of a symmetry can be equivalently realized as the twistor representative $\bd g$ of a linearized field; the presence of a singularity now being essential to ensure that the corresponding cohomology class is non trivial. Accordingly, \eqref{Representation: action} can be thought of as the action of a graviton on an other and the above restriction amounts to requiring that these are not inserted at the same point of the celestial sphere (thus avoiding a type of collinear singularity). 

\paragraph{Carrollian representation of the algebra} \mbox{}\\
\vspace{-0.3cm}

The action $\delta_{g}\phi(u,z,\bar z)$ of a generator \eqref{Representation: generators_g} on a Carrollian field $\phi(u,z,\bar z) \in \mathcal{C}_{(k, \bar k)}^{\infty}(\mathscr{I}_{\mathcal{A}})$ of weight $(k ,\bar k) = (\frac{1-\as}{2}, \frac{1 +\as}{2} )$, with $\as =\pm2$, is obtained by following the successive steps of Fig. \ref{fig:twistor_journey}: i) First the Carrollian field $\phi(u,z,\bar z)$ is lifted to the preferred twistor representative $\mathbf{f}:= \mathbf{T}^\as(\phi) \in \Omega^{0,1}(\mathbb{PT}, \mathcal{O}(2\as-2))$ via the map \eqref{Trivial journey: general twistor lift}. The generator $g$ of the algebra then acts on this representative as \eqref{Representation: action}.  ii) Second the Penrose transform of $\delta_{g} \mathbf{f}$  yields a solution $\delta_g\Phi(x)$ of the zero-rest-mass equation of helicity $\as$. iii) Third, taking the limit $r\to \infty$ we obtain the corresponding Carrollian field $\delta_g\phi(u,z,\bar z)$.

\qquad In section \ref{Details of the proof} we go through this procedure for a Carrollian field $\bar\sigma \in \mathcal{C}_{(-\frac{1}{2}, \frac{3}{2})}^{\infty}(\mathscr{I}_{\mathcal{A}})$ corresponding to a field of helicity $+2$ (respectively for $\sigma \in \mathcal{C}_{(\frac{3}{2}, -\frac{1}{2})}^{\infty}(\mathscr{I}_{\mathcal{A}})$, corresponding to a field of helicity $-2$) and derive the following.

\vspace{0.3cm}
\begin{bluebox}
\begin{Proposition}\label{Proposition: Representation1}~
The action of the generator $g_{\da(n)} (z)$ of the $\mathscr Lw_{1+\infty}$ algebra on the Carrollian fields of spin $2$, $\bar \sigma(u,z, \bar z)$ and $\sigma(u,z, \bar z) $, is given by the following:~
\begin{equation}\label{Representation: spin 2 action}
\badat{2}
&\delta_n \bar \sigma =  \sum_{\ell=0}^{n} \bar \p^{n-\ell} \Big( g_{\da(n)} \bar \lambda^{\da(n)}\Big)   \frac{\ell}{(n-\ell)!} \partial_u^3 \Big( u^{n-\ell}\,\partial_u^{-1-\ell} \,\bar \p ^{\ell-1} \bar \sigma \bigg)\,,\\
 & \delta_n  \sigma =  \sum_{\ell=0}^{n} \bar \p^{n-\ell} \Big(  g_{\da(n)}\bar \lambda^{\da(n)}\Big)   \frac{\ell}{(n-\ell)!} \partial_u^{-1} \Big( u^{n-\ell}\,\partial_u^{3-\ell}\, \bar \p ^{\ell-1}  \sigma  \Big)\,,
  \eadat
\end{equation}
where $n \in \mathbb{N}$ and $\bar \p := \frac{\p}{\p \bar z}$.
\end{Proposition}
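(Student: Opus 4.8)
The strategy is to follow literally the three-step journey of Fig.~\ref{fig:twistor_journey}, now with the $Lw_{1+\infty}$ action \eqref{Representation: action} inserted between the lift and the Penrose transform. Concretely, I would: (i) lift $\bar\sigma$ (resp.\ $\sigma$) to the twistor representative $\mathbf h$ (resp.\ $\hT$) via \eqref{Trivial journey: Twistor positive representative} (resp.\ \eqref{Trivial journey: Twistor negative representative}); (ii) apply $\delta_g\mathbf f=\{g,\mathbf f\}$ with $g=g_{\da(n)}(z)\,\mu^{\da(n)}$, so that $\delta_n\mathbf h = n\,g_{\da(n)}(z)\,\mu^{\da(n-1)}\,\epsilon^{\da\db}\partial_{\mu^{\db}}\mathbf h$, a $(0,1)$-form of the right homogeneity $\mathcal{O}(2\as-2)$; (iii) feed $\delta_n\mathbf h$ into the Penrose transform formulas \eqref{Trivial journey: spacetime field}/\eqref{Psineg}, obtaining a spacetime field $\delta_n\Phi(x)$; and (iv) take $r\to\infty$ using the saddle-point identity $e^{\mp i\omega x\cdot\zeta}\simeq \mp\tfrac{\pi i}{\omega r}e^{\mp i\omega u}\delta(z-\zeta)$ exactly as in the trivial journey, and read off $\delta_n\bar\sigma$ (resp.\ $\delta_n\sigma$). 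The annulus restriction $\mathscr I_{\mathcal A}$ guarantees every integral is well defined at $z=0,\infty$.

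The real content is bookkeeping the $\mu^{\da}$-derivatives. On the support of the Penrose transform one sets $\mu^{\da}=x^{\alpha\da}\zeta_\alpha = u\bar n^{\da} + r\langle\lambda\zeta\rangle\bar\lambda^{\da}$, and after the $\omega$-integral each factor of $\mu^{\da}$ acts effectively as a multiplication; but the $\partial/\partial\mu^{\da}$'s in $\{g,\mathbf h\}$ must be commuted \emph{through} the $\partial^2/\partial\mu\partial\mu$ already present in the helicity-$2$ Penrose kernel and through the exponential $e^{\mp i\omega\mu^{\da}\bar\lambda_{\da}}$ before restricting. The cleanest route is: write $\delta_n\mathbf h$ in momentum space, note that $\partial_{\mu^{\db}}$ brings down $\mp i\omega\bar\lambda_{\db}$ (plus a term hitting the explicit $1/\omega$), contract all the $\zeta_\alpha$'s and $\bar\zeta_{\da}$'s that appear, and organize the remaining structure as a sum over how many of the $n$ spinors $\bar\lambda^{\da}$ in $g_{\da(n)}\bar\lambda^{\da(n)}$ get ``used up'' versus how many of the accompanying $r\langle\lambda\zeta\rangle$ factors survive. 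Each surviving power of $r\langle\lambda\zeta\rangle$ combines, under the saddle point $\zeta\to\lambda$, with the measure and the $\bar\partial$ acting on the exponential to produce exactly one $\bar\partial$; the leftover $u\bar n^{\da}$ pieces produce the explicit powers $u^{n-\ell}$; and the index $\ell$ in \eqref{Representation: spin 2 action} is precisely the number of $\bar\lambda^{\da}$'s that remain contracted with $g_{\da(n)}$ after the Poisson bracket. The various $\partial_u^{\pm}$ powers are tracked from the $\omega$-weights: $\bar\sigma$ carries $\omega^{0}$, $\mathbf h$ carries $\omega^{-1}$, each $\mu$-derivative on the exponential adds an $\omega$, the helicity kernel adds $\omega^{2}$, and the large-$r$ identity trades one $\omega^{-1}$ for $\partial_u^{-1}$; collecting gives the $\partial_u^3(u^{n-\ell}\partial_u^{-1-\ell}\cdots)$ (resp.\ $\partial_u^{-1}(u^{n-\ell}\partial_u^{3-\ell}\cdots)$) structure. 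I would do the positive-helicity case in full and then obtain the negative-helicity one by the same computation with the kernel $\zeta_\alpha\zeta_\beta\zeta_\gamma\zeta_\delta$ and $\omega^{3}$ in place of $\omega$, which only shifts the overall $\partial_u$-powers.

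\textbf{Main obstacle.} The delicate point is step (iv): because $g$ depends on $z$ through $\mu^{\da}=x^{\alpha\da}\zeta_\alpha$ only after restriction, the $z$-dependence of $\delta_n\mathbf h$ that one must differentiate in the large-$r$ limit is entangled with the $\zeta$-integration variable, so the naive replacement $e^{\mp i\omega x\cdot\zeta}\to \delta(z-\zeta)$ must be applied to an integrand already carrying $n$ powers of $r$. One has to check that the seemingly divergent powers $r^{n}$ are compensated: each $\mu^{\da}=u\bar n^{\da}+r\langle\lambda\zeta\rangle\bar\lambda^{\da}$ is $\mathcal O(r)$, but the Penrose transform produces an overall $r^{-\as-1}$ from its kernel and the $r\to\infty$ extraction removes one more $r$, while the $\bar\partial$'s hitting the exponential in $\Phi_{\bar z\bar z}$ each supply a compensating factor — so the net power is finite precisely when the counting in \eqref{Representation: spin 2 action} is done consistently. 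Making this cancellation manifest (equivalently: showing that only the ``top'' saddle-point term $\delta(z-\zeta)$, not its $\bar z$-derivatives, contributes at the order that survives) is the step that requires care; everything else is the same stationary-phase manipulation already carried out for the trivial journey in section~\ref{subsec:trivial journey}.
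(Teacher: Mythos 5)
Your overall architecture is the same as the paper's proof in section~\ref{Details of the proof}: evaluate on plane waves (or Fourier modes, which is equivalent by linearity), lift with $\mathbf{T}^{\pm2}$, act with $\{g,\cdot\}$, Penrose transform, and read off the shear at large $r$. However, your handling of the step you yourself flag as the main obstacle contains a genuine error. Upon restriction $\mu^{\da}=x^{\alpha\da}w_{\alpha}=u\,\bar n^{\da}+r\langle\lambda w\rangle\bar\lambda^{\da}$ one gets, via \eqref{id2}, terms $(r\langle\lambda w\rangle)^{\ell}$ for all $0\le\ell\le n-1$, and the finite answer arises by pairing each such factor with the \emph{subleading} terms of the plane-wave expansion \eqref{saddle_pw} through the distributional identity \eqref{id1}: $\langle\lambda w\rangle^{\ell}\,\partial^{k}\delta(\langle\lambda w\rangle)$ survives only for $k=\ell$, so each power of $r\langle\lambda w\rangle$ is traded for one factor of $\bar\partial/(\pm i\omega_0)$, i.e.\ $\bar\partial\,\partial_u^{-1}$ acting on the shear, while the would-be $r^{\ell}$ growth and the $\iota$-dependent prefactor drop out on the support of the resulting $\delta$. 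Your parenthetical claim that the delicate point is to show that ``only the top saddle-point term $\delta(z-\zeta)$, not its $\bar z$-derivatives, contributes'' is the opposite of the actual mechanism: keeping only the leading $\delta$ would, by $\langle\lambda w\rangle^{\ell}\delta(\langle\lambda w\rangle)=0$, annihilate every $\ell\ge1$ term and leave only the $\bar\partial$-free piece, i.e.\ a wrong, truncated version of \eqref{Representation: spin 2 action} (it also contradicts your own earlier sentence that each $r\langle\lambda\zeta\rangle$ ``combines with the $\bar\partial$ acting on the exponential''). Your power counting is likewise off: the Penrose kernel carries no inverse powers of $r$; the suppression comes one power at a time from the series \eqref{saddle_pw}, and the cancellation is exact term by term via \eqref{id1}, not a net balancing of exponents.

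The remaining steps you compress into ``bookkeeping'' are indeed mechanical but should not be skipped in a full proof: the three terms produced by $\partial^{2}G^{(n)}/\partial\mu\,\partial\mu$ in \eqref{dG} must each be processed, the spinor $\bar w^{\da}$ converted into $\bar\lambda^{\da}$ and $\bar n^{\da}$ pieces via \eqref{id3}, the result repackaged with \eqref{id4} into $\bar\partial^{\,n-\ell}\big(\bar\lambda^{\da(n)}\big)$, and finally $g_{\da(n)}$ — which at this stage is a function of the wave direction $w$, not of the Bondi coordinate $z$ — must be moved through the $\bar\partial$'s; this last step is a point of principle and uses holomorphy of $g$ on $\mathcal{A}$ together with the support of the plane wave, which your annulus remark gestures at but does not carry out. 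So: same route as the paper, but with the central large-$r$ cancellation misidentified in a way that would derail the computation if followed as written.
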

\end{bluebox}

\begin{proof}
See section \ref{Details of the proof}.
\end{proof}

The proposition can also be  rephrased as follows: for a Carrollian field $\phi$ of weight $(\frac{1+\as}{2}, \frac{1-\as}{2})$, corresponding to a zero-rest-mass field of helicity $\as=\pm2$, we can write the action of the generators as
\begin{equation}\label{representation: general Carrollian action of g}
\badat{2}
&\delta_n \phi =  \sum_{\ell=0}^{n} \bar \p^{n-\ell} \Big(g_{\da(n)} \bar \lambda^{\da(n)}\Big)   \frac{\ell}{(n-\ell)!} \partial_u^{1+\as} \Big( u^{n-\ell}\,\partial_u^{1-\as-\ell} \,\bar \p ^{\ell-1} \phi \Big)\,,
  \eadat
\end{equation}
and it is tempting to conjecture that this formula extends to any spin $\as\in\mathbb{Z}$. Even though our method of derivation in principle applies to any spin in practice the computation is rather lengthy and in this article we will only explicitly show the proof for the spin-two case formula. Nevertheless, as it should be clear from the proof of the proposition below, the representation itself does extend to any spin. Let us emphasize, though, that the corresponding representation on spin-one fields should not be confused with the infinitesimal action of the symmetries of self-dual Yang-Mills in twistor space (as e.g. in \cite{MASON:1988-65,Mason88-Backlund}) nor with the one appearing in the celestial gluon OPE \cite{Guevara:2021abz,Strominger:2021lvk,Himwich:2021dau}. Rather, these are the extension of the gravitational twistor symmetry to other spins (this is similar to the fact that the BMS group acts on all fields regardless of their helicity but e.g. is not the group of asymptotic symmetries of QED).

\vspace{0.3cm}
\qquad The action \eqref{representation: general Carrollian action of g} is obviously linear (and as such does not contain an inhomogeneous (soft) shift). In fact, as we shall now see, it forms a representation of the algebra. 

\begin{Proposition}\label{Proposition: Representation2}
The action of the generators \eqref{Representation: generators_g} on Carrollian fields $\mathcal{C}_{(k, \bar k)}^{\infty}(\mathscr{I}_{\mathcal{A}})$ of weight $(k ,\bar k) = (\frac{1-\as}{2}, \frac{1 +\as}{2} )$, as defined through the procedure explained before Proposition \ref{Proposition: Representation1}, forms a representation of the $\mathscr Lw_{1+\infty}$ algebra. In particular the action \eqref{Representation: spin 2 action} is a representation of $\mathscr Lw_{1+\infty}$.
\end{Proposition}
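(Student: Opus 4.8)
The plan is to avoid checking the bracket $[\delta_{g_1},\delta_{g_2}] = \delta_{\{g_1,g_2\}}$ directly on the unwieldy Carrollian formula~\eqref{Representation: spin 2 action}, and instead to argue that it holds \emph{by construction}, transporting the obvious statement on twistor space through the chain of maps in Fig.~\ref{fig:twistor_journey}. Concretely, the action $\delta_g$ on Carrollian fields was defined as the composite $\delta_g = (\text{large-}r) \circ \mathcal{P} \circ \delta_g^{\mathbb{PT}} \circ \mathbf{T}^{\as}$, where $\mathbf{T}^{\as}$ is the lift~\eqref{Trivial journey: general twistor lift}, $\delta_g^{\mathbb{PT}}\mathbf{f} = \{g,\mathbf{f}\}$ is the twistor action~\eqref{Representation: action}, and $\mathcal{P}$ is the Penrose transform~\eqref{Trivial journey: Penrose transform theorem} whose inverse (on the relevant cohomology) is realized concretely by the large-$r$ limit, as verified in the ``trivial journey'' of section~\ref{subsec:trivial journey}. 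So the proof has three ingredients, carried out in this order.

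First, I would record that the twistor-space action is a genuine representation: for $g_1,g_2$ of homogeneity degree $2$, the Jacobi identity for the holomorphic Poisson bracket~\eqref{Carrollian represetation: twistor algebra} gives $\{g_1,\{g_2,\mathbf{f}\}\} - \{g_2,\{g_1,\mathbf{f}\}\} = \{\{g_1,g_2\},\mathbf{f}\}$, i.e. $[\delta_{g_1}^{\mathbb{PT}},\delta_{g_2}^{\mathbb{PT}}] = \delta_{\{g_1,g_2\}}^{\mathbb{PT}}$ on $\Omega^{0,1}(\mathbb{PT},\mathcal{O}(2\as-2))$; this is exactly the statement (due to~\cite{Adamo:2021lrv}) that $Lw_{1+\infty}$ is realized by~\eqref{Carrollian represetation: twistor algebra}, and one checks $\{g_1,g_2\}$ again has degree $2$ so the bracket closes. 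One must also note that $\delta_g^{\mathbb{PT}}$ preserves $\TD$-closedness (since $\TD$ and $\{g,\cdot\}$ commute, $g$ being holomorphic) so it descends to Dolbeault cohomology. Second — and this is the one genuine subtlety — I would check that the restriction to fields supported in the annulus $\mathcal{A}$ is consistent: since $g$ is holomorphic on $\mathbb{C}^*$, the bracket $\{g,\mathbf{f}\}$ has support contained in that of $\mathbf{f}$, hence the class of maps $\mathbf{T}^{\as}\big(\mathcal{C}^{\infty}_{(k,\bar k)}(\mathscr{I}_{\mathcal{A}})\big)$ is preserved by $\delta_g^{\mathbb{PT}}$, the Penrose transform and large-$r$ limit remain well-defined exactly as in section~\ref{subsec:trivial journey}, and no boundary terms at $z=0,\infty$ are generated. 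Third, I would invoke that $\mathbf{T}^{\as}$ and the large-$r$ map are mutually inverse bijections onto their images — which is precisely the content of the trivial journey of section~\ref{subsec:trivial journey}, upgraded from $\as=\pm2$ to the stated weights — so that conjugating the representation $\delta_g^{\mathbb{PT}}$ by these intertwiners yields a representation $\delta_g$ on Carrollian fields. Combining the three ingredients: $[\delta_{g_1},\delta_{g_2}] = (\text{large-}r)\circ\mathcal{P}\circ[\delta_{g_1}^{\mathbb{PT}},\delta_{g_2}^{\mathbb{PT}}]\circ\mathbf{T}^{\as} = (\text{large-}r)\circ\mathcal{P}\circ\delta_{\{g_1,g_2\}}^{\mathbb{PT}}\circ\mathbf{T}^{\as} = \delta_{\{g_1,g_2\}}$, which is the claim; specializing to $\as=\pm 2$ gives that~\eqref{Representation: spin 2 action} is a representation.

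The main obstacle I anticipate is not the algebra — the Jacobi identity does all the work — but making rigorous the claim that the Penrose transform $\mathcal{P}$ and the large-$r$ limit are honest inverse bijections on the relevant function spaces, so that conjugation is legitimate. In particular one needs the large-$r$/stationary-phase map to be injective on the image of $\mathbf{T}^{\as}$ (no nontrivial kernel), and one needs to know that $\delta_g^{\mathbb{PT}}\mathbf{f}$, though possibly not literally of the special form produced by $\mathbf{T}^{\as}$ applied to a Carrollian field, still lies in a space on which $\mathcal{P}$ followed by large-$r$ reproduces the correct Carrollian representative — equivalently, that the explicit formula~\eqref{Representation: spin 2 action} is well-defined (the various $\partial_u^{-1}$ and $\bar\partial^{\ell-1}$ are the only delicate points, and the annulus restriction together with the $\omega$-integral representations of section~\ref{subsec:trivial journey} handle them). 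An alternative, more pedestrian route — should a referee want it — is to take the closed formula~\eqref{representation: general Carrollian action of g} at face value and verify $[\delta_m,\delta_n] = \delta_{[g_m,g_n]}$ by a direct, if tedious, computation using $\{w^p_m,w^q_n\} = 2(m(q-1)-n(p-1))w^{p+q-2}_{m+n}$; I would mention this as a consistency check but not carry it out, since the conceptual argument above is cleaner and is the one the construction was designed to make manifest.
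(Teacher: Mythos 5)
Your overall strategy---transporting the twistor-space representation down to $\mathscr{I}$ through the lift $\mathbf{T}^{\as}$ and the Penrose/large-$r$ maps---is the same as the paper's, and your first two ingredients (Jacobi identity for the fibrewise Poisson bracket, preservation of the annulus support) are unproblematic. The gap is in your third ingredient, and it is exactly where the paper's proof does all of its work. The trivial journey of section~\ref{subsec:trivial journey} only establishes the one-sided identity $(\text{large-}r)\circ\mathcal{P}\circ\mathbf{T}^{\as}=\mathrm{id}$ on Carrollian fields; it does \emph{not} make $\mathbf{T}^{\as}$ and $(\text{large-}r)\circ\mathcal{P}$ mutually inverse on twistor representatives, and indeed they are not: $\delta^{\mathbb{PT}}_{g_1}\mathbf{f}$ is generically not of the special form $f(\mu^{\da}\bar\lambda_{\da},\lambda,\bar\lambda)\,D\bar\lambda$ produced by the lift, so the preferred representative $\mathbf{f}_1=\mathbf{T}^{\as}(\delta_{g_1}\phi)$ agrees with $\delta^{\mathbb{PT}}_{g_1}\mathbf{f}$ only up to a $\TD$-exact term, $\delta^{\mathbb{PT}}_{g_1}\mathbf{f}=\mathbf{f}_1+\TD\alpha_1$. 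Since the composite Carrollian action $\delta_{g_2}\delta_{g_1}\phi$ is \emph{defined} by acting with $g_2$ on $\mathbf{f}_1$ rather than on $\delta^{\mathbb{PT}}_{g_1}\mathbf{f}$, your chain $[\delta_{g_1},\delta_{g_2}]=(\text{large-}r)\circ\mathcal{P}\circ[\delta^{\mathbb{PT}}_{g_1},\delta^{\mathbb{PT}}_{g_2}]\circ\mathbf{T}^{\as}$ tacitly assumes that the discrepancy $\delta^{\mathbb{PT}}_{g_2}\TD\alpha_1$ is invisible to the Penrose transform. That is not automatic, because $g_2$ is singular at $z=0$ and $z=\infty$: contrary to what your first ingredient asserts, $\{g,\cdot\}$ does \emph{not} commute with $\TD$ globally ($\TD g$ is a distribution supported at $z=0,\infty$), so $\{g_2,\TD\alpha_1\}$ has no a priori reason to remain exact. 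You explicitly flag this as an ``anticipated obstacle'' but never close it; closing it is essentially the entire content of the paper's proof, so as written your argument is incomplete at its key step.

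The missing lemma is proved in the paper as follows: $\delta_{g}\TD\alpha=\{g,\TD\alpha\}=\TD\{g,\alpha\}-\{\TD g,\alpha\}$; the second term vanishes because $\TD g$ is supported at $z=0,\infty$ while $\alpha$ (like $\mathbf{f}_1$ and $\delta^{\mathbb{PT}}_{g_1}\mathbf{f}$) is supported on the annulus $\mathcal{A}$; and since $g$ is regular on $\mathcal{A}$, the potential $\{g,\alpha\}$ is regular on all of $S^2$, so $\delta_{g}\TD\alpha$ is globally $\TD$-exact and is annihilated by the Penrose transform. With this lemma inserted, your conjugation argument goes through and reduces to the paper's proof; note that the holomorphicity of $g$ on $\mathcal{A}$ and the annulus restriction enter precisely here, not merely in your ``support preservation'' step. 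Your suggested alternative (verifying $[\delta_m,\delta_n]=\delta_{[g_m,g_n]}$ directly on the closed formula~\eqref{representation: general Carrollian action of g}) would of course also work, but you correctly do not carry it out, so it cannot be credited as filling the gap.
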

\begin{proof}\mbox{}\\
 Let $\phi(u,z,\bar z)$ be a Carrollian field of weight $(k ,\bar k) = (\frac{1-\as}{2}, \frac{1 + \as}{2} )$. Via the map \eqref{Trivial journey: general twistor lift} it defines a preferred twistor representative $\mathbf{f}= \mathbf{T}^\as(\phi) \in \Omega^{0,1}(\mathbb{PT}, \mathcal{O}(2 \as-2))$.  In this proof only, let us introduce a shorthand notation that will be very useful: if $\mathbf{A} \in \Omega^{0,1}(\mathbb{PT}, \mathcal{O}(2 \as-2))$ is any twistor representative we will denote by   $\widetilde{\mathbf{A}}(u,z,\bar z)$ the Carrollian field of weight $(k ,\bar k) = (\frac{1-\as}{2}, \frac{1 +\as}{2} )$ obtained from $\mathbf{A}$ by successive Penrose transform and large $r$ expansion. For example we have, by construction, $\widetilde{\mathbf{f}}(u,z,\bar z) = \phi(u,z,\bar z)$. Let $\delta_{g_i} \mathbf{f} = \{f , g_i\}  \bar D \bar \lambda$ be the twistor representative resulting from the action of elements $g_1$, $g_2$, on $\mathbf{f}$ and let $\delta_{g_i}\phi(u,z,\bar z) := \widetilde{\delta_{g_i}\mathbf{f}}(u,z,\bar z)$ be the corresponding action on the Carrollian field. We will note $\mathbf{f}_i = \mathbf{T}^\as\left(\delta_{g_i}\phi\right)$ the preferred twistor representative associated to $\delta_{g_i}\phi(u,z,\bar z)$. Both $\delta_{g_i} \mathbf{f}$ and $\mathbf{f}_i$ define the same spacetime fields through the Penrose transform, 
\begin{equation*}
    \delta_{g_i}\phi(u,z,\bar z) = \widetilde{\delta_{g_i} \mathbf{f}}(u,z,\bar z) = \widetilde{\mathbf{f}_i}(u,z,\bar z),
\end{equation*}
but they do not have to coincide in general. Rather these two representatives must be in the same cohomology class (this is because of the isomorphism \eqref{Trivial journey: Penrose transform theorem} underlying the Penrose transform, see \cite{Eastwood:1981jy} for a proof). This means that
\begin{equation*}
    \delta_{g_i} \mathbf{f} = \mathbf{f}_i + \TD \alpha_i
\end{equation*}
where $\alpha_1$, $\alpha_2$ are some functions on twistor space. Importantly, since both $\delta_{g_i} \mathbf{f}$ and $\mathbf{f}_i$ only have support on the annulus $\mathcal{A}$ so do $\alpha_{1}$ and $\alpha_2$. Now by definition $\delta_{g_2}\delta_{g_1}\phi(u,z,\bar z) = \widetilde{\delta_{g_2} \mathbf{f}_1}(u,z,\bar z)$ and thus
\begin{equation*}
    \delta_{g_2}\delta_{g_1}\phi(u,z,\bar z) = \widetilde{\delta_{g_2}\delta_{g_1} \mathbf{f}}(u,z,\bar z) -  \widetilde{\delta_{g_2}\TD \alpha_1}(u,z,\bar z).
\end{equation*}
In order to prove that we have a representation of the algebra, we need to prove that the Carrollian field $\delta_{\{g_1,g_2\}}\phi(u,z,\bar z) = \widetilde{\delta_{\{g_1,g_2\}}\mathbf{f}}(u,z,\bar z)$ obtained from $\delta_{\{g_1,g_2\}}\mathbf{f} = (\delta_{g_1}\delta_{g_2} - \delta_{g_2}\delta_{g_1})\mathbf{f}$ coincides with $(\delta_{g_1}\delta_{g_2} - \delta_{g_2}\delta_{g_1})\phi(u,z,\bar z)$. We therefore need to prove that 
\begin{align*}
    \delta_{\{g_1,g_2\}}&\phi(u,z,\bar z) - (\delta_{g_1}\delta_{g_2} - \delta_{g_2}\delta_{g_1})\phi(u,z,\bar z)\\
    &= \widetilde{\delta_{\{g_1,g_2\}}\mathbf{f}}(u,z,\bar z) - (\delta_{g_1}\delta_{g_2} - \delta_{g_2}\delta_{g_1})\phi(u,z,\bar z)\\
    &= \left( \widetilde{\delta_{g_1}\delta_{g_2}\mathbf{f}}(u,z,\bar z) - \widetilde{\delta_{g_2}\delta_{g_1}\mathbf{f}}(u,z,\bar z) \right)- \Big(\delta_{g_1}\delta_{g_2}\phi(u,z,\bar z) - \delta_{g_2}\delta_{g_1}\phi(u,z,\bar z)\Big)\\
    &= \widetilde{\delta_{g_1}\TD \alpha_2}(u,z,\bar z) + \widetilde{\delta_{g_2}\TD \alpha_1}(u,z,\bar z)
\end{align*}
vanishes. We will in fact prove that the terms $\delta_{g}\TD\alpha$ are $\TD$-exact, since the Penrose transform annihilates exact forms this will be enough to conclude. To see this, we note that
\begin{align*}
    \delta_{g}\TD\alpha = \{g , \TD \alpha\} = \TD \{g , \alpha\} - \{\TD g, \alpha \}.
\end{align*}
Now $\TD g$ has only support on $z=0$ and $z=\infty$, while $\alpha$ only has support on the annulus $\mathcal{A}$ therefore the last term vanishes. Finally since $g$ is non singular on $\mathcal{A}$ and $\alpha$ only has support on this set it follows that $\{g , \alpha\}$ is a non singular function on the whole of $S^2$. Thus $ \delta_{g}\TD\alpha= \TD \{g , \alpha\}$ is $\TD$-exact and $\widetilde{\delta_{g}\TD \alpha}(u,z,\bar z)=0$.

\end{proof}
Note that, even though the action \eqref{Representation: action} would be admissible for any $g(z,\bar z)$, holomorphicity on $\mathcal{A}$ of $g$, $\bar \partial g=0$, is crucial in the above proof.

\subsection{Action of the simplest generators $n=1,2,3$}

\qquad In order to illustrate the general expression of Proposition \ref{Proposition: Representation1}, let us write down the action of the simplest generators\footnote{Notice that the $n=0$ ($p=1$) generator acts trivially. \label{norma_delta}}. 

\paragraph{ 
\hspace{-0.2cm}\boldmath $n=1$
}
\mbox{}
\vspace{-0.1cm}
\begin{align}
\delta_1 \bar \sigma &= \bigg[ g_{\da_1}\bar \lambda^{\da_1}  \p_u  \bigg]\bar \sigma\,, &
\delta_1 \sigma &= \bigg[ g_{\da_1}\bar \lambda^{\da_1}  \p_u  \bigg] \sigma\,.
\end{align}

This coincides with the usual action of a supertranslation vector field which can be expanded in modes as
\begin{equation}
\tau_{g_1}(z,\bar{z}) \partial_u=   g_{\da_1}\bar \lambda^{\da_1}  \partial_u =\left( \sum_{m=-\infty}^{\infty} \sum_{\bar m = 0}^{1} T_{m,\bar m} z^{m} \bar z^{\bar m} \right)\partial_u\,.
\end{equation}

\paragraph{ 
\hspace{-0.2cm}\boldmath $n=2$
}\mbox{}\\
\vspace{-0.1cm}

From
\begin{equation}
\badat{2}
\delta_2 \bar \sigma &= \bigg[ \bar \p\left( 2 g_{\da(2)}  \bar \lambda^{\da(2)}\right) \left(\frac{3}{2} + \frac{u}{2} \p_u\right)  + \left( 2 g_{\da(2)}\bar \lambda^{\da(2)}\right) \bar\p \bigg] \bar \sigma   \,,\\ 
\delta_2 \sigma &=\bigg[\bar \partial( 2g_{\da(2)} \bar \lambda^{\da(2)}) \left( -\frac{1}{2} + \frac{u}{2} \partial_u  \right)+ \left(2 g_{\da(2)} \lambda^{\da(2)}\right) \bar \partial \bigg]  \sigma  \,.\\ 
 \eadat
\end{equation}
one recognizes the usual action of the vector field
\begin{equation}\label{Representation: Sl(2,C)-loop vector field}
\tau_{g_2}(z,\bar{z}) \; \bar \partial  =   2g{}_{\da \db} (z) \bar \lambda^{\da}\bar \lambda^{\db}  \; \bar \partial = \left( \sum_{m=-\infty}^{\infty} \sum_{\bar m = 0}^{2} L_{m,\bar m} z^{m} \bar z^{\bar m} \right) \; \bar \partial 
\end{equation}
 on the shear. To interpret this, it is here useful to read from \eqref{Carrollian represetation: twistor algebra} with $n=2$ the algebra,
\begin{equation}
    \Big\{g_1{}_{\alpha \beta} (z,\bar z) \mu^{\alpha}\mu^{\beta} \;,\; g_2{}_{\alpha \beta} (z,\bar z)\mu^{\alpha}\mu^{\beta}\Big\} = 4 \, g_1{}_{\gamma \alpha}(z,\bar z) g_2{}^{\gamma}{}_{\beta}(z,\bar z) \mu^{\alpha}\mu^{\beta}.
\end{equation}
When the generators are globally holomorphic, this is the $\textrm{SL}(2 , \mathbb{C})$ algebra.  Under the more general assumption \eqref{Representation: Laurent expansion} that the generators admit Laurent series on $\mathbb{C}^*$, this is the $\textrm{SL}(2 , \mathbb{C})$-loop algebra \cite{Banerjee:2020zlg}. A direct computation shows that it is isomorphic to the algebra of vector fields on $\mathbb{C}^*$ of the form \eqref{Representation: Sl(2,C)-loop vector field}.

\paragraph{ 
\hspace{-0.2cm}\boldmath $n=3$
}\mbox{}
\vspace{-0.3cm}\\

Let us finally illustrate the action for the (less familiar) $n=3$  generator,
\begin{align}
    \badat{2}
\delta_3\bar \sigma &=
 \bigg[ \left(  \bar \lambda^{\da(3)}  g_{\da(3)} \right) 3 \;\partial_u^{-1}\bar \p^2  + \bar \p \left( g_{\da(3)} \bar \lambda^{\da(3)}\right) \Big( 2u +6\partial_u^{-1} \Big)\bar \p\\
&\hspace{4cm}+  \bar \p^2 \left( g_{\da(3)}\bar\lambda^{\da(3)}\right) \Big(  3u +3 \partial_u^{-1}   +\frac{1}{2} u^2 \partial_u \Big)  \bigg] \bar \sigma\,,\\
\delta_3\sigma &=
 \bigg[ \left(  \bar \lambda^{\da(3)}  g_{\da(3)} \right) 3 \;\partial_u^{-1}\bar \p^2 + \bar \p \left( g_{\da(3)} \bar \lambda^{\da(3)}\right) \Big(2 u -2\partial_u^{-1} \Big)\bar \p\\
&\hspace{4cm}+  \bar \p^2 \left( g_{\da(3)}\bar\lambda^{\da(3)}\right) \Big(  -u + \partial_u^{-1}   +\frac{1}{2} u^2 \partial_u \Big)  \bigg] \sigma\,.
  \eadat
\end{align}
This action corresponds to the one of the sub-subleading current labelled by $s=2$ in \cite{Freidel:2021dfs,Freidel:2021ytz} and denoted by $p=\frac{5}{2}$ in \cite{Strominger:2021lvk}.

\section{Canonical action of charges}
\label{sec:charges}

In this section, we show that the action of the $\mathscr Lw_{1+\infty}$ symmetries at null infinity derived in Proposition \ref{Proposition: Representation1} coincides with the action of the canonical charges on the radiative data that was derived from gravitational phase space methods in \cite{Freidel:2021ytz}. 
The dictionary is that the label $s$ referred to as the spin there is related to the generator index as $s=n-1=2p-3$ (hence supertranslations are spin $0$, superrotations spin $1$, etc.).

\subsection{Canonical charges}

Let us first review the prescription of \cite{Freidel:2021ytz} for the construction of spin-$s$ charges $\mathcal Q_s$\footnote{The dictionary to convert the expressions of this reference into ours is $C \mapsto \bar \sigma$ and $D := \frac{1}{\sqrt{2}}\edth \mapsto \bar\partial$. 
Note again that, as compared to the usual definition of the shear, we have $\bar\sigma^{\mathrm{here}}=C_{\bar z \bar z}=2\bar\sigma^{\mathrm{NP}}$; see \cite{Adamo:2009vu, Geiller:2022vto} for a remainder of Newman-Penrose (NP) conventions.
}.
As part of the procedure, a first set $\mathcal{Q}_s$ for $s \geq -2$ is defined by solving the following recursion relation
\begin{equation}\label{Canonical charges: recusion relations}
    \mathcal{Q}_s (u,z,\bar z) =  \left(\partial_u^{-1} \bar \partial \right) \mathcal{Q}_{s-1} + \frac{s+1}{2}   \partial_u^{-1} \left[ \bar \sigma  \mathcal{Q}_{s-2} \right]\,,
\end{equation}
starting from $\mathcal{Q}_{-2}:= \frac{1}{2} \partial_u^2 \sigma$. 
The hard charges, denoted by $\mathcal{Q}^2_s$, are then obtained by only keeping the terms in $\mathcal{Q}_s$ which are quadratic (hence the superscript 2) in the shear $\bar{\sigma}$, $\sigma$. The result is
\begin{equation}\label{Qs}
    \mathcal{Q}^2_s (u,z,\bar z) = \frac{1}{4} \sum_{\ell =0}^{s} (\ell+1)\partial_u^{-1} \left(\partial_u^{-1} \bar \partial\right)^{s-\ell}  \left[ \bar \sigma \left( \partial_u^{-1} \bar \partial\right)^{\ell}  \partial_u^2  \sigma\right].
\end{equation}
It receives the following interpretation: for $s=-2,-1,0,1,2$ each of the quantities $\mathcal{Q}_s$ is proportional to the Newman-Penrose scalars $\Psi^0_4, \Psi^0_3, \Psi^0_2$, $\Psi^0_1$, $\Psi^0_0$ and the recursion relations \eqref{Canonical charges: recusion relations} correspond to their associated Bianchi identities \cite{Newman:1968uj}. The charges $\mathcal{Q}^2_s$ for generic $s$ are thus quadratic quantities in the shear that generalize the behavior of the (quadratic part of) Newman-Penrose scalars. 

\qquad The action of these charges on the shear would generically be divergent (see \cite{Freidel:2021dfs}); for this reason one introduces the following renormalization prescription  
\begin{equation}
    \hat{q}_s^2 (u,z ,\bar z) := \sum_{n=0}^{s} \frac{(-u)^{s-n}}{(s-n)!}{\bar \partial}^{s-n} \mathcal{Q}^2_n(u,z,\bar z)\,,
\end{equation}
which will define higher-spin charge aspects $q^2_s(z,\bz)=\displaystyle{\lim_{u \to -\infty}}\hat{q}_s^2 (u,z ,\bar z)$.
Using \eqref{Qs}, one then has
\begin{equation}
    \hat{q}^2_s (u,z,\bar z) = \frac{1}{4} \sum_{n=0}^s\sum_{\ell =0}^{n} \frac{(\ell+1)(-u)^{s-n}}{(s-n)!}\partial_u^{-(n-\ell+1)} {\bar\partial}^{s-\ell} \left[ \bar \sigma \left( \partial_u^{-1} {\bar\partial}\right)^{\ell}  \partial_u^2 \sigma \right].
\end{equation}
Ashtekar-Streubel's symplectic structure \cite{Ashtekar:1981bq}
\begin{equation}
    \{ \partial_u \sigma( u, z, \bar z) , \bar \sigma(u', z', \bar z') \} = \frac{\kappa^2}{2} \delta(u-u')\delta(z-z'),
\end{equation}
then allows to compute the action of the charges as
\begin{align}
\badat{2}
  \{q_s^2(z,\bz), \bar\sigma(u', z',\bz')\}  &= \lim_{u\to -\infty} \{ \hat{q}^2_s (u,z,\bar z) , \bar \sigma(u',z',\bar z ')\}\,,\\
  \{q_s^2(z,\bz), \sigma(u', z',\bz')\}  &= \lim_{u\to -\infty} \{ \hat{q}^2_s (u,z,\bar z) , \sigma(u',z',\bar z ')\}\,.
  \eadat
\end{align}

The result is\footnote{See eq. (67) and (68).} \cite{Freidel:2021ytz}
\begin{align}\label{Canonical charges: bulk-s-cfin}
\,
\left\{q_s^2(z,\bar z), \bar\sigma(u', z',\bar z')\right\} &= \frac{\kappa^2}{8}\sum_{n = 0}^s (-1)^{s + n}
\frac{(n+1)(\hat \Delta + 2)_{s - n}}{(s-n)!} \p_{u'}^{1-s} \bar \partial^n \bar\sigma(u', z',\bz') \bar \partial^{s - n} \delta(z-z')\,,\\
\left\{q_s^2(z,\bar z), \sigma(u', z',\bar z')\right\} &= \frac{\kappa^2}{8}\sum_{n = 0}^s (-1)^{s + n}
\frac{(n+1)(\hat \Delta - 2)_{s - n}}{(s-n)!} \p_{u'}^{1-s} \bar \partial^n \sigma(u', z',\bz') \bar \partial^{s - n} \delta(z-z')\nonumber\,,
\end{align}
where $\hat \Delta:=u \p_u + 1$ and $(x)_n= x (x-1) \cdots (x-n+1)$ is the falling factorial.

\subsection{Matching with the twistor space action}

The integration of \eqref{Canonical charges: bulk-s-cfin} against a function $\tau_s(z,\bar z)$ on the sphere yields the actions of a generator
\bee
\badat{2}
\delta_{\tau_s} \bar \sigma(u, z,\bar z) &= \left\{  \frac{8}{\kappa^2}\int_{S^2} d\zeta^2 \tau_s(\zeta,\bar \zeta) q_s^2(\zeta,\bar \zeta) \;,\;    \bar \sigma(u,z,\bar z) \right\} \\
\delta_{\tau_s} \sigma(u, z,\bar z) &= \left\{  \frac{8}{\kappa^2}\int_{S^2} d\zeta^2 \tau_s(\zeta,\bar \zeta) q_s^2(\zeta,\bar \zeta) \;,\;    \sigma (u,z,\bar z) \right\} \\
\eadat
\eee
which reads
\bee\label{actions}
\badat{2}
&\delta_{\tau_s} \bar \sigma(u, z) = \sum_{\ell = 0}^s 
\frac{(\ell+1)(\hat \Delta + 2)_{s - \ell}}{(s-\ell)!}  (\bar \partial^{s - \ell}\tau_s)    \bar \partial^\ell \p_{u}^{1-s} \bar \sigma(u, z)\,\\
&\delta_{\tau_s} \sigma(u, z) = \sum_{\ell = 0}^s 
\frac{(\ell+1)(\hat \Delta - 2)_{s - \ell}}{(s-\ell)!}  (\bar \partial^{s - \ell}\tau_s)    \bar \partial^\ell \p_{u}^{1-s} \sigma(u, z)\,.
\eadat
\eee

In order to relate these expressions with the results of section \ref{sec:action}, it is useful to note the following identities \cite{Freidel:2021ytz}
\bee
\badat{2}
&u^n\p_u^n=(\hat \Delta-1)_n\,,\qquad \p_u^nu^n=(\hat \Delta+n-1)_n\,,\qquad
u^{-n}\p_u^{-n}=(\hat \Delta+n-1)^{-1}_n\,,\cr
& \p_u (\hat \Delta+\alpha)_n= (\hat \Delta+\alpha+1)_n \p_u\,,\qquad
\p_u^{-1} (\hat \Delta+\alpha)_n= (\hat \Delta+\alpha-1)_n \p_u^{-1}\,,\cr 
& u(\hat \Delta+\alpha)_n= (\hat \Delta+\alpha-1)_n u
\,,\qquad
u(\hat \Delta+n-1)^{-1}_n=(\hat \Delta+n-2)^{-1}_n u\,.
\label{uDelta}
\eadat
\eee
They lead to 
\bee
\frac{(\hat \Delta + 2)_{s - \ell}}{(s-\ell)!} &=
\pa_u^3 \left(\frac{(\hat \Delta -1)_{s - \ell}}{(s-\ell)!}\right)\pa_u^{-3}
= \pa_u^3\left(\frac{ u^{s-\ell}}{(s-\ell)!}\pa_u^{s-\ell} \right)\pa_u^{-3}\,,
\eee
and, similarly, one has 
\bee
\frac{(\hat \Delta - 2)_{s - \ell}}{(s-\ell)!} &= 
\pa_u^{-1} \frac{(\hat \Delta  - 1)_{s - \ell}}{(s-\ell)!} \pa_u
= \pa_u^{-1}\left(  \frac{u^{s-\ell}\pa_u^{s-\ell}}{(s-\ell)!} \right) \pa_u\,.
\eee

We can therefore rewrite the action \eqref{actions} as 
\bee
\badat{2}
&\delta_{\tau_s} \bar \sigma(u, z) = \sum_{\ell = 0}^s 
\frac{(\ell+1)}{(s-\ell)!}(\bar \partial^{s - \ell}\tau_s)\,
\pa_u^3\Big( u^{s-\ell}\pa_u^{-2-\ell}   \bar \partial^\ell   \bar \sigma(u, z) \Big)\,, \\
&\delta_{\tau_s} \sigma(u, z) = \sum_{\ell = 0}^s 
\frac{(\ell+1)}{(s-\ell)!} (\bar \partial^{s - \ell}\tau_s) \,
\pa_u^{-1} \Big(u^{s-\ell}\pa_u^{2-\ell} 
\bar \partial^\ell \sigma(u, z) \Big)\,.
\eadat
\eee
Using the relationship $s=n-1$, they read
\bee
\badat{2}
&\delta_{\tau_{n-1}} \bar \sigma(u, z) = \sum_{\ell = 1}^n 
\frac{\ell}{(n-\ell)!}(\bar \partial^{n - \ell}\tau_{n-1})\,
\pa_u^3\Big( u^{n-\ell}\pa_u^{-1-\ell}   \bar \partial^{\ell-1}   \bar \sigma(u, z) \Big)\,, \\
&\delta_{\tau_{n-1}} \sigma(u, z) = \sum_{\ell = 1}^n 
\frac{\ell}{(n-\ell)!} (\bar \partial^{n - \ell}\tau_{n-1}) \,
\pa_u^{-1} \Big(u^{n-\ell}\pa_u^{3-\ell} 
\bar \partial^{\ell-1} \sigma(u, z) \Big)\,,
\eadat
\eee
with coincides with the twistor actions of Proposition \ref{Proposition: Representation1} with generators $\tau_{n-1} = g_{\da(n)} \bar \lambda^{\da(n)}$.
In \cite{Freidel:2023gue}, a direct proof that this action forms a representation of the Schouten-Nijenhuis algebra of multivector fields\cite{marle1997schouten}
\bee\label{Schouten-N}
[\tau_{n-1},\tau'_{m-1}]= m \tau'_{m-1} \bar\p \tau_{n-1} - n \tau_{n-1} \bar\p \tau'_{m-1}
\eee
was given under the condition that $\bar \p^{n+1}\tau_{n-1}=0$, which is satisfied by the generators $\tau_{n-1} = g_{\da(n)} \bar \lambda^{\da(n)}$. This condition corresponds to the vanishing of the integrated soft charge.
This result confirms the conclusion of Proposition \ref{Proposition: Representation2} derived from the twistor action.
In \cite{Freidel:2023gue, Hu:2023geb}, it was also shown that relaxing the condition $\bar\p^{n+1}\tau_{n-1}=0$ is possible at the price of having a non-linear action of the Schouten-Nijenhuis algebra on the asymptotic phase space. We leave the discussion on the meaning of this nonlinear extension from the twistor space perspective for future work.

\section{Details of the proof}
\label{Details of the proof}
\label{sec:proof}
This section contains the proof of Proposition \ref{Proposition: Representation1}, following the different steps as described in Fig.~\ref{fig:twistor_journey}. 

\subsection{Positive helicity }
We will evaluate the action of the algebra on plane waves
\begin{align}\label{proof: plane wave}
\badat{2}
\bar \sigma(u, \lambda)& = \mp\frac{i\kappa }{8\pi^2} e^{\mp i\omega_0 u} \delta(\langle \lambda w \rangle)\, \in \mathcal{C}_{(-\frac{1}{2}, \frac{3}{2})}^{\infty}(\mathscr{I}_{\mathcal{A}});
\eadat
\end{align}
by linearity of the representation it will extend to any Carrollian field \eqref{sigmabar} admitting a Fourier transform. 

\qquad The first step $(i)$ consists in uplifting the self-dual Carrollian field \eqref{proof: plane wave} to twistor space.  One easily obtains the twistor representative $\mathbf{h}=h(u=\mu^\da \bar \lambda_\da,\lambda) D\bar \lambda$ with
\begin{equation}
h(\mu^\da \bar \lambda_\da,\lambda) =\frac{\kappa}{8\pi^2\omega_0} e^{\mp i\omega_0 (\mU \bla)}\,\delta(\langle \lambda w \rangle)\,.
\end{equation}

\quad The action of the $w_{1+\infty}$ symmetries on $\mathbf h$ is given by \eqref{Representation: action}
\begin{equation}
\badat{2}\label{deltah}
\delta \h=\{ g, \h \}= \left(\frac{\partial g}{\partial \mu^\da}\right)\epsilon^{\da \dot \beta}\frac{\partial h}{\partial \mu^{\dot \beta}} D \bar\lambda\,,
 \eadat
\end{equation}
with the generators $g$ given in \eqref{Representation: generators_g}. We thus get, at fixed $n$,
\begin{equation}
\badat{2}
\delta \h(\mu, \lambda)&=\frac{\mp i\kappa}{8\pi^2}\left(\frac{\partial g}{\partial \mu^{\da_1}}\right)\bar \lambda^{\da_1}\,e^{\mp i\omega_0 (\mU \bla)}\,\delta(\langle \lambda w \rangle)[ \bar\lambda d\bar \lambda ]\\
&=\frac{\mp i\kappa n}{8\pi^2} G^{(n)}(\mu,\lambda)\,\delta(\langle \lambda w \rangle)[ \bar\lambda d\bar \lambda ]\,,
 \eadat
\end{equation}
where
\begin{equation}\label{G(n)}
 G^{(n)}(\mu,\lambda) := g_{\da_1\dots \da_n}\bar \lambda^{\da_1}\mu^{\da_2}\dots \mu^{\da_n} e^{\mp i\omega_0 (\mU \bla)}\,.
\end{equation}

\quad We now implement the second step $(ii)$ by plugging the transformed twistor representative $\mathbf h$ into the Penrose transform. This leads to
\begin{align}
\badat{2}\label{deltaPhi}
\delta \Phi_{\alpha \dot \alpha \beta \db}(x)&=\frac{1}{2\pi i}\int_{\mathbb{CP}^1} \langle \zeta d\zeta\rangle\wedge \frac{\iota_\alpha \iota_\beta}{\langle \iota \zeta\rangle^2}\,
\frac{\partial^2 \delta \mathbf h}{\partial \mU  \partial \mu^{\dot \beta}}
(\mU=x^{\alpha \da}\zeta_\alpha,\zeta_\alpha)\\
&= \mp\frac{ \kappa n}{16\pi^3}\int_{\mathbb{CP}^1}  \langle \zeta d\zeta\rangle\wedge[\bar\zeta d\bar \zeta ] \delta(\langle \zeta w \rangle) \frac{\iota_\alpha \iota_\beta}{\langle \iota \zeta\rangle^2}\, \frac{\partial^2 G^{(n)}}{\partial \mU  \partial \mu^{\dot \beta}}(\mU=x^{\alpha \da}\zeta_\alpha,\zeta_\alpha)\\
&= \pm\frac{ i\kappa n}{8\pi^3} \frac{\iota_\alpha \iota_\beta}{\langle \iota w\rangle^2}\, \frac{\partial^2 G^{(n)}}{\partial \mU  \partial \mu^{\dot \beta}}(\mU=x^{\alpha \da}w_\alpha,w_\alpha)\,,
 \eadat
\end{align}
where
\begin{align}\label{dG}
\frac{\partial^2 G^{(n)}}{\partial \mU  \partial \mu^{\dot \beta}}(\mu,w)
 &=  \Big(-(\omega_0)^2 g_{\da_1\dots \da_{n}}\bar w^{\da_1}\mu^{\da_2}\dots \mu^{\da_{n}}\bar w_{\da}\bar w_{\db} \mp i\omega_0 2(n-1)\bar w_{(\db}g_{\da) \da_1\dots \da_{n-1}}\bar w^{\da_1}\mu^{\da_2}\dots \mu^{\da_{n-1}} \nonumber \\
 &\,\,\,\,+(n-1)(n-2)g_{\da \db \da_1\dots \da_{n-2}}\bar w^{\da_1}\mu^{\da_2}\dots \mu^{\da_{n-2}}
 \Big)\,e^{\mp i\omega_0 (\mU \bar w_{\dot \alpha})} \,
\end{align}
and where we used the $\delta$-function normalisation $\int\frac{i}{2}d\zeta d\bar \zeta \, \delta(\zeta) =1$. Note that in the above equation the generators $g_{\da(n)}$ really are functions of $w$, which correspond to the direction of the plane wave, and not functions of $z$ which are the Bondi coordinates. Contracting \eqref{deltaPhi} with \eqref{Preamble: partial z expression}, we obtain the transformed bulk field
\begin{align}\label{start}
\badat{2}
\delta\Phi_{\bz \bz}(x)&= \pm r^2 \frac{ i\kappa n}{8\pi^3} \frac{\langle \iota \lambda \rangle^2}{\langle \iota w\rangle^2}\; \bar n^{\dot\alpha} \bar n^{\dot\beta} \frac{\partial^2 G^{(n)}}{\partial \mU  \partial \mu^{\dot \beta}}(\mU=x^{\alpha \da}w_\alpha,w_\alpha)\,.
 \eadat
\end{align}
We will split the computation according to the three different terms that appear in \eqref{dG}, namely
\begin{align}\label{sum}
\badat{2}
\delta\Phi_{\bz \bz}(x)
&:=F_1(x )+F_2(x)+F_3(x)
 \eadat
\end{align}
with \begin{align}\label{f1}
\badat{2}
F_1(x)&= \pm r^2 \frac{ i\kappa n}{8\pi^3} \frac{\langle \iota \lambda \rangle^2}{\langle \iota w\rangle^2}\; \bar n^{\dot\alpha} \bar n^{\dot\beta} \Big(-(\omega_0)^2 g_{\da_1\dots \da_{n}}\bar w^{\da_1}\mu^{\da_2}\dots \mu^{\da_{n}}\bar w_{\da}\bar w_{\db}\Big) e^{\mp i\omega_0 (\mU \bar w_{\dot \alpha})}\Big|_{\mU=x^{\alpha \da}w_\alpha}\\
&= -\frac{\langle \iota \lambda \rangle^2}{\langle \iota w\rangle^2}\;\Big(  (\omega_0)^2  g_{\da(n)}\bar w^{\da}\mu^{\da(n-1)}\Big) \left(  \pm r^2 \frac{ i\kappa n}{8\pi^3} e^{\mp i\omega_0 ( \mu^{\da}\bar w_{\dot \alpha})} \right)\Big|_{\mU=x^{\alpha \da}w_\alpha}\,,
 \eadat
\end{align}
where we recall that we use the notation $\dot \alpha(n):=(\dot \alpha_1\dots \dot \alpha_n)$ for $n$ symmetrized indices,
\begin{align}\label{f2}
F_2(x)&=  \pm r^2 \frac{ i\kappa n}{8\pi^3}  \frac{\langle \iota \lambda \rangle^2}{\langle \iota w\rangle^2}\; \bar n^{\dot\alpha} \bar n^{\dot\beta} \left(\mp i\omega_0 2(n-1)\bar w_{(\db}g_{\da) \da_1\dots \da_{n-1}}\bar w^{\da_1}\mu^{\da_2}\dots \mu^{\da_{n-1}}\right) e^{\mp i\omega_0 (\mU \bar w_{\dot \alpha})}\Big|_{\mU=x^{\alpha \da}w_\alpha} \nonumber\\
&= \mp 2i (n-1)  \frac{\langle \iota \lambda \rangle^2}{\langle \iota w\rangle^2}\;  \Big(\omega_0\, g_{\da(n)} \bar n^{\dot\alpha} \bar w^{\da}\mu^{\da(n-2)}\Big)\left(\pm r^2 \frac{ i\kappa n}{8\pi^3} e^{\mp i\omega_0 (\mU \bar w_{\dot \alpha})}\right)\Big|_{\mU=x^{\alpha \da}w_\alpha}\,,
\end{align}
and
\begin{align}\label{f3}
F_3(x)&=  \pm r^2 \frac{ i\kappa n}{8\pi^3}  \frac{\langle \iota \lambda \rangle^2}{\langle \iota w\rangle^2}\; \bar n^{\dot\alpha} \bar n^{\dot\beta} \left((n-1)(n-2)g_{\da \db \da_1\dots \da_{n-2}}\bar w^{\da_1}\mu^{\da_2}\dots \mu^{\da_{n-2}}\right) e^{\mp i\omega_0 (\mU \bar w_{\dot \alpha})}\Big|_{\mU=x^{\alpha \da}w_\alpha}\nonumber\\
&=  (n-1)(n-2) \frac{\langle \iota \lambda \rangle^2}{\langle \iota w\rangle^2}\;  \Big( g_{\da(n)} \bar n^{\dot\alpha(2)} \bar w^{\da}\mu^{\da(n-3)} \Big)\left(\pm r^2 \frac{ i\kappa n}{8\pi^3} e^{\mp i\omega_0 (\mU \bar w_{\dot \alpha})}\right)\Big|_{\mU=x^{\alpha \da}w_\alpha}\,.
\end{align}

We will first focus on working out the first term \eqref{f1}, which reads  \begin{align}
\badat{2}\label{513}
F_1(x)
&= -\frac{\langle \iota \lambda \rangle^2}{\langle \iota w\rangle^2}\;\Big( (\omega_0)^2 g_{\da(n)}\bar w^{\da}  x^{\da(n-1) \alpha(n-1)} w_{\ga(n-1)}\Big) \left(  \pm r^2 \frac{ i\kappa n}{8\pi^3} e^{\mp i\omega_0 ( x^{\alpha \da}  w_{ \alpha}\bar w_{\dot \alpha})} \right)\,.
 \eadat
\end{align}
Remembering that
\begin{equation}
\badat{2}\label{xomega}
    x^{\alpha \da}w_\alpha &=  (u \, n^{\alpha} \bar n^{\da}+ r\, \lambda^{\alpha} \,\bar\lambda^{\da}) w_{\alpha}\\
    &= r \bar\lambda^{\da} \langle \lambda w \rangle + u \bar n^{\da}\,,
    \eadat
\end{equation}
one finds the identity
\begin{align}\label{id2}
\badat{2}
 g_{\da(n)} x^{\da(n-1) \alpha(n-1)} w_{\ga(n-1)} &= g_{\da(n)} \left(r \bar \lambda^{\da_1} \langle \lambda w \rangle  + u \bar n^{\da_1} \right) \dots \left(r \bar \lambda^{\da_{n-1}} \langle \lambda w \rangle  + u \bar n^{\da_{n-1}} \right)\\
 &=  g_{\da(n)} \sum_{\ell=0}^{n-1} \binom{n-1}{\ell}  \Big(r \langle \lambda w \rangle \Big)^\ell  \bar \lambda^{\da(\ell)}\;  u^{n-1-\ell} \bar n^{\da(n-1-\ell)}\,.
 \eadat
\end{align}
\quad We now arrive at a crucial step in the proof (step $(iii)$), which consists of performing the large-$r$ expansion in order to obtain the transformed field at $\mathscr I$.
At face value, it seems from the presence of  $r^{\ell}$ terms in expression \eqref{id2} that \eqref{513} will render an expression which dramatically blows up at $\mathscr I$. This is however not the case, due to a mechanism we will now detail. 

As $r\to \infty$, the plane waves admit to be written as\footnote{This expression can be derived by looking for the unique solution of the wave equation in Bondi coordinates $\left(-\frac{1}{r}\p_u-\p_u\p_r+\frac{1}{r^2}\p_z\p_\bz \right) \phi =0$ which is the form $e^{\mp i\omega_0 u} \psi(r,z,\bar z)$ and satisfies the asymptotic boundary condition $e^{\mp i\omega_0 (x^{\alpha \da}w_\alpha \bar w_{\dot \alpha})} =  \mp \frac{\pi i}{\omega_0 r} e^{\mp i\omega_0 u}\delta(z-\zeta) +\mathcal O(r^{-2}) $. See also \cite{Donnay:2022ijr} for similar asymptotic expressions.}
\begin{equation}
\badat{2}
    e^{\mp i\omega_0 (x^{\alpha \da}w_\alpha \bar w_{\dot \alpha})}&=e^{\mp i\omega_0 (u+r|z-w|^2)} \\
    &=\mp i\pi  \frac{e^{\mp i\omega_0 u}}{r \omega_0}\left(  1 +  \frac{1}{\pm i\omega_0 r} \p \bar \p  +  \frac{1}{2(\pm i\omega_0 r)^2} (\p\bar \p)^2 + ...\right)\delta(\langle \lambda w \rangle)\\
    &=  \mp i\pi  \frac{e^{\mp i\omega_0 u}}{r \omega_0}  \sum_{k=0}^{\infty}  \frac{1}{k!}\left(\frac{r^{-1}}{\pm i\omega_0} \p\bar\p\right)^k \delta(\langle \lambda w \rangle) \,,
    \eadat
\end{equation}
and hence 
\begin{align}\label{saddle_pw}
\pm r^2\frac{in\kappa}{8\pi^3} e^{\mp i\omega_0 (x^{\alpha \da}w_\alpha \bar w_{\dot \alpha})}  &= \frac{r n \kappa }{8\pi^2 \omega_0}e^{\mp i\omega_0 u}\sum_{k=0}^{\infty}  \frac{1}{k!}\left(\frac{r^{-1}}{\pm i\omega_0} \p\bar \p\right)^k \delta(\langle \lambda w \rangle) \,.
\end{align}

Plugging \eqref{id2} and \eqref{saddle_pw} into \eqref{513}, we arrive at
\begin{align}
F_1(x)=& -\frac{r n \kappa \omega_0}{8\pi^2 }e^{\mp i\omega_0 u}\frac{\langle \iota \lambda \rangle^2}{\langle \iota w\rangle^2} \;\\
&\times g_{\da(n)} \bar w^{\da} \sum_{\ell=0}^{n-1} \sum_{k=0}^{\infty}   \Big( \binom{n-1}{\ell}  \left(r \langle \lambda w \rangle \right)^\ell  \bar \lambda^{\da(\ell)} u^{n-1-\ell} \bar n^{\da(n-1-\ell)} \Big)   \frac{1}{k!}\left(\frac{r^{-1}}{\pm i\omega_0} \p\bar \p\right)^k \delta(\langle \lambda w \rangle)\nonumber\,.
\end{align}
Therefore, thanks to the distributional identity
\begin{equation}
\badat{2}\label{id1}
    \langle \lambda w \rangle^m\p^n \delta(\langle \lambda w \rangle)&= \delta_{m,n} (-1)^n n! \,\delta(\langle \lambda w \rangle)\,, \qquad\qquad \forall\; m\geq n\,,
    \eadat
\end{equation}
we see that each overleading term in $r$ vanishes!  Moreover, all factors neatly combine to give a leading $\mathcal O(r)$ term free from the residual gauge ambiguity related to $\iota^{\alpha}$, namely
\begin{align}
\badat{2}
F_1(x)&=\frac{r n \kappa \omega_0}{8\pi^2 }e^{\mp i\omega_0 u}\; g_{\da(n)}\bar w^{\da} \sum_{\ell=0}^{n-1}  (-1)^{\ell+1}  \binom{n-1}{\ell}  \bar \lambda^{\da(\ell)}  \bar n^{\da(n-1-\ell)} u^{n-1-\ell}\left(\frac{\bar \p}{\pm i\omega_0} \right)^{\ell} \delta(\langle \lambda w \rangle) \\
&\quad + \mathcal{O}(r^0)\,.
 \eadat
\end{align}

\quad We can now rewrite $F_1$ by recalling the definition of the plane waves \eqref{proof: plane wave}; using 
\begin{align}
   (\pm i\omega_0)^m e^{\mp i \omega_0 u} & = (-\partial_u)^m e^{\mp i \omega_0 u}\,, 
\end{align}
we get
\begin{align}
&F_1(x)\nonumber\\
&=\pm i r n \omega_0  g_{\da(n)}\bar w^{\da}\; \bigg [  \sum_{\ell=0}^{n-1}  (-1)^{\ell+1}  \binom{n-1}{\ell}  \bar \lambda^{\da(\ell)}  \bar n^{\da(n-1-\ell)}  u^{n-1-\ell} \left(\frac{\bar \p}{\pm i\omega_0} \right)^{\ell} \bigg]  \Big( \frac{\mp i\kappa}{8\pi^2 } e^{\mp i\omega_0 u}  \delta(\langle \lambda w \rangle) \Big)+\mathcal{O}(r^0)\nonumber\\
&=r n   g_{\da(n)}\bar w^{\da}\; \bigg [  \sum_{\ell=0}^{n-1} \binom{n-1}{\ell}  \bar \lambda^{\da(\ell)}  \bar n^{\da(n-1-\ell)} \;  u^{n-1-\ell}\;  \bar \p ^{\ell} (\partial_u)^{1-\ell}  \bigg]  \bar \sigma+ \mathcal{O}(r^0)\,.
\end{align}
We now make use of the following identity
\begin{equation}\label{id3}
   \bar w^{\da} (\bar \p)^m \delta(\langle \lambda w \rangle) = \bar \lambda^{\da}  (\bar \p)^m \delta(\langle \lambda w \rangle) + m \bar \p \bar \lambda^{\da}  (\bar \p)^{m-1} \delta(\langle \lambda w \rangle)\,,
\end{equation}
and obtain
\begin{align}
F_1(x)
&= r n   g_{\da(n)}\; \bigg [  \sum_{\ell=0}^{n-1} \binom{n-1}{\ell}  \bar \lambda^{\da(\ell+1)}  \bar n^{\da(n-1-\ell)} \;  u^{n-1-\ell}\;  \bar \p ^{\ell} (\partial_u)^{1-\ell} \nonumber\\
&\hspace{3cm}+ \sum_{\ell=0}^{n-1} \ell \binom{n-1}{\ell}  \bar \lambda^{\da(\ell)}  \bar n^{\da(n-\ell)} \;  u^{n-1-\ell}\;  \bar \p ^{\ell-1} (\partial_u)^{1-\ell} \bigg]  \bar \sigma+ \mathcal{O}(r^0)\nonumber\\
&= r n   g_{\da(n)}\; \bigg [   \sum_{\ell=1}^{n} \binom{n-1}{\ell-1}  \bar \lambda^{\da(\ell)}  \bar n^{\da(n-\ell)} \;  u^{n-\ell}\;  \bar \p ^{\ell-1} (\partial_u)^{2-\ell}  \\
&\hspace{3cm}+ \sum_{\ell=1}^{n-1} \ell \binom{n-1}{\ell}  \bar \lambda^{\da(\ell)}  \bar n^{\da(n-\ell)} \;  u^{n-1-\ell}\;  \bar \p ^{\ell-1} (\partial_u)^{1-\ell} \bigg]  \bar \sigma+ \mathcal{O}(r^0)\,\nonumber\\
&=r n   g_{\da(n)}\; \bigg [ \sum_{\ell=1}^{n}  \bar \lambda^{\da(\ell)}  \bar n^{\da(n-\ell)}   \binom{n-1}{\ell-1}  \Big(   u^{n-\ell}(\partial_u)^{2-\ell}  +  (n-\ell) u^{n-1-\ell}  (\partial_u)^{1-\ell} \Big) \bar \p ^{\ell-1} \bigg]  \bar \sigma+ \mathcal{O}(r^0)\nonumber\,\\
&= r n   g_{\da(n)}\; \sum_{\ell=1}^{n}  \bar \lambda^{\da(\ell)}  \bar n^{\da(n-\ell)}   \binom{n-1}{\ell-1}  \partial_u \Big(   u^{n-\ell}(\partial_u)^{1-\ell}  \;  \bar \p ^{\ell-1}  \bar \sigma \Big)+ \mathcal{O}(r^0)\nonumber\,,
\end{align}
where we used $\binom{n-1}{\ell} = \frac{n-\ell}{\ell}\binom{n-1}{\ell-1}$ in the third equality. Finally, using that
\begin{equation}\label{id4}
  g_{\da(n)}\bar\lambda^{\da(\ell)} n^{\da(n-\ell)} = g_{\da(n)} \frac{\ell!}{n!} \bar \p^{n-\ell} \left( \bar \lambda^{\da(n)} \right)\,, 
\end{equation}
we end up with
\begin{align}
\badat{2}
F_1(x)
&= r n   g_{\da(n)}\; \sum_{\ell=1}^{n} \bar \p^{n-\ell} \Big(\bar \lambda^{\da(n)}\Big)   \frac{\ell!}{n!}\binom{n-1}{\ell-1}  \partial_u \Big(   u^{n-\ell}(\partial_u)^{1-\ell}  \;  \bar \p ^{\ell-1}  \bar \sigma \Big)+ \mathcal{O}(r^0)\\
&= r   g_{\da(n)}\; \sum_{\ell=1}^{n} \bar \p^{n-\ell} \Big(\bar \lambda^{\da(n)}\Big)   \frac{\ell}{(n-\ell)!} \partial_u \Big(   u^{n-\ell}(\partial_u)^{1-\ell}  \;  \bar \p ^{\ell-1}  \bar \sigma \Big)+ \mathcal{O}(r^0)\,.
 \eadat\label{F1}
\end{align}

\quad The computation for the two remaining terms will follow along very similar lines. The second term \eqref{f2} reads
\begin{align}
\badat{2}
F_2(x)
&= \mp 2i (n-1)  \frac{\langle \iota \lambda \rangle^2}{\langle \iota w\rangle^2}\;  \Big(\omega_0\, g_{\da(n)} \bar n^{\dot\alpha} \bar w^{\da} x^{\da(n-2) \ga(n-2)} w_{\ga(n-2)} \Big)\left(\pm r^2 \frac{ i\kappa n}{8\pi^3} e^{\mp i\omega_0 (x^{\ga \da} w_{\ga}\bar w_{\dot \alpha})}\right)\,.
 \eadat
\end{align}
Using again \eqref{saddle_pw} together with the identity
\begin{align}
\badat{2}
 g_{\da(n)} x^{\da(n-2) \alpha(n-2)} w_{\ga(n-2)}
 &=  g_{\da(n)} \sum_{\ell=0}^{n-2} \binom{n-2}{\ell}  \left(r \langle \lambda w \rangle  \right)^\ell  \bar \lambda^{\da(\ell)} u^{n-2-\ell}  \bar n^{\da(n-2-\ell)}\,,
 \eadat
\end{align}
we obtain
\begin{align}\label{5here}
F_2(x)
&= \mp \frac{n(n-1) i r  \kappa }{4\pi^2}e^{\mp i\omega_0 u}  \frac{\langle \iota \lambda \rangle^2}{\langle \iota w\rangle^2}\;  \Big( g_{\da(n)} \bar n^{\dot\alpha} \bar w^{\da}\sum_{\ell=0}^{n-2} \binom{n-2}{\ell}  \left(r \langle \lambda w \rangle  \right)^\ell  \bar \lambda^{\da(\ell)} u^{n-2-\ell}\Big)   \bar n^{\da(n-2-\ell)}\Big)\nonumber\\
&\quad \quad \quad \times \sum_{k=0}^{\infty}  \frac{1}{k!}\left(\frac{r^{-1}}{\pm i\omega_0} \p\bar \p\right)^k \delta(\langle \lambda w \rangle)\,.
\end{align}
All overleading terms in $r$ and residual gauge factors in \eqref{5here} again disappear thanks to the identity \eqref{id1} and we are left with
\begin{align}
F_2&=  \frac{\mp n(n-1) i r  \kappa }{4\pi^2}e^{\mp i\omega_0 u} g_{\da(n)} \bar w^{\da}\sum_{\ell=0}^{n-2} (-1)^\ell \binom{n-2}{\ell}  \bar \lambda^{\da(\ell)}  \bar n^{\da(n-1-\ell)}u^{n-2-\ell}\left(\frac{\bar \p}{\pm i\omega_0} \right)^\ell \delta(\langle \lambda w \rangle)+\mathcal O(r^0)\nonumber\\
&=   2n(n-1) r \,  g_{\da(n)} \bar w^{\da} \left[ \sum_{\ell=0}^{n-2} \binom{n-2}{\ell}  \bar \lambda^{\da(\ell)}  \bar n^{\da(n-1-\ell)}u^{n-2-\ell} \bar \p^\ell  \p_u^{-\ell}\right] \bar \sigma+\mathcal O(r^0)\,,
\end{align}
where we have reinstated $\bar \sigma$ as given in \eqref{proof: plane wave}. We can now make use of the identity \eqref{id3} to write
\begin{align}
F_2(x)&=   2n(n-1) r   g_{\da(n)} \bigg[ \sum_{\ell=0}^{n-2} \binom{n-2}{\ell}  \bar \lambda^{\da(\ell+1)}  \bar n^{\da(n-1-\ell)}u^{n-2-\ell} \bar \p^\ell  \p_u^{-\ell}\nonumber\\
& \hspace{4cm} + \sum_{\ell=0}^{n-2} \ell\binom{n-2}{\ell}  \bar \lambda^{\da(\ell)}  \bar n^{\da(n-\ell)}u^{n-2-\ell} \bar \p^{\ell-1}  \p_u^{-\ell} \bigg] \bar \sigma+\mathcal O(r^0)\nonumber\\
&=   2n(n-1) r   g_{\da(n)} \bigg[ \sum_{\ell=1}^{n-1} \binom{n-2}{\ell-1}  \bar \lambda^{\da(\ell)}  \bar n^{\da(n-\ell)}u^{n-1-\ell} \bar \p^{\ell-1}  \p_u^{1-\ell}\\
& \hspace{4cm} + \sum_{\ell=1}^{n-2} \ell\binom{n-2}{\ell}  \bar \lambda^{\da(\ell)}  \bar n^{\da(n-\ell)}u^{n-2-\ell} \bar \p^{\ell-1}  \p_u^{-\ell} \bigg] \bar \sigma+\mathcal O(r^0)\nonumber\\
&=   2n(n-1) r   g_{\da(n)} \bigg[ \sum_{\ell=1}^{n-1}  \bar \lambda^{\da(\ell)}  \bar n^{\da(n-\ell)} \binom{n-2}{\ell-1} \Big( u^{n-1-\ell}\p_u^{1-\ell} + (n-1-\ell)u^{n-2-\ell} \p_u^{-\ell}   \Big)   \bar \p^{\ell-1}     \bigg] \nonumber\bar \sigma\\
&\quad +\mathcal O(r^0)\nonumber\\
&=   2n(n-1) r  g_{\da(n)} \sum_{\ell=1}^{n-1}  \bar \lambda^{\da(\ell)}  \bar n^{\da(n-\ell)} \binom{n-2}{\ell-1} \p_u\Big( u^{n-1-\ell}\p_u^{-\ell}  \bar \p^{\ell-1}  \bar \sigma \Big)  +\mathcal O(r^0)\nonumber\,,
\end{align}
where we used $\binom{n-2}{\ell} = \frac{(n-\ell-1)}{\ell}\binom{n-2}{\ell-1}$ in the third equality. Finally, using \eqref{id4}, we end up with
\begin{align}
\badat{2}\label{F2}
F_2(x)&=   2n(n-1) r   g_{\da(n)} \sum_{\ell=1}^{n-1}  \bar \p^{n -\ell}\Big( \bar \lambda^{\da(n)} \Big) \frac{\ell!}{n!} \binom{n-2}{\ell-1} \p_u\Big( u^{n-1-\ell}\p_u^{-\ell}  \bar \p^{\ell-1}  \bar \sigma \Big)  +\mathcal O(r^0)\\
&=   2 r   g_{\da(n)} \sum_{\ell=1}^{n-1}  \bar \p^{n -\ell}\Big( \bar \lambda^{\da(n)} \Big) \frac{\ell}{(n-\ell-1)!} \p_u\Big( u^{n-1-\ell}\p_u^{-\ell}  \bar \p^{\ell-1}  \bar \sigma \Big)  +\mathcal O(r^0)\,.
\eadat
\end{align}

\quad Finally, following the same procedure as described above for the last term \eqref{f3}, we get
\begin{align}
F_3(x)
&=  (n-1)(n-2) \frac{\langle \iota \lambda \rangle^2}{\langle \iota w\rangle^2}\;  \Big( g_{\da(n)} \bar n^{\dot\alpha(2)} \bar w^{\da} x^{\da(n-3) \ga(n-3)} w_{\ga(n-3)} \Big)\left(\pm r^2 \frac{ i\kappa n}{8\pi^3} e^{\mp i\omega_0 (x^{\ga \da} w_{\ga}\bar w_{\dot \alpha})}\right),\nonumber\\
&=\frac{ n(n-1)(n-2) r \kappa}{8\pi^2 \omega_0} e^{\mp i\omega_0 u} \frac{\langle \iota \lambda \rangle^2}{\langle \iota w\rangle^2}\;  \Big( g_{\da(n)} \bar n^{\dot\alpha(2)} \bar w^{\da} u^{n-3-\ell}  \sum_{\ell=0}^{n-3} \binom{n-3}{\ell}  \left(r \langle \lambda w \rangle\right)^\ell  \bar \lambda^{\da(\ell)}  \bar n^{\da(n-3-\ell)} \Big)\nonumber\\
&\quad \quad \quad \quad \times \sum_{k=0}^{\infty}  \frac{1}{k!}\left(\frac{r^{-1}}{\pm i\omega_0} \p\bar \p\right)^k \delta(\langle \lambda w \rangle)\\
&= \frac{ n(n-1)(n-2) r \kappa}{8\pi^2 \omega_0} e^{\mp i\omega_0 u}   g_{\da(n)} \bar w^{\da}   \sum_{\ell=0}^{n-3}(-1)^\ell \binom{n-3}{\ell}    \bar \lambda^{\da(\ell)}  \bar n^{\da(n-1-\ell)} u^{n-3-\ell}\left(\frac{\bar \p}{\pm i\omega_0} \right)^\ell \delta(\langle \lambda w \rangle)\nonumber\\
&\quad +\mathcal O(r^0)\nonumber\\
&=  n(n-1)(n-2) r   g_{\da(n)} \bar w^{\da}\left[ \sum_{\ell=0}^{n-3} \binom{n-3}{\ell}    \bar \lambda^{\da(\ell)}  \bar n^{\da(n-1-\ell)} u^{n-3-\ell}   \bar \p^\ell (\p_u)^{-1-\ell} \right]\bar \sigma+\mathcal O(r^0)\nonumber\,.
\end{align}
The identity \eqref{id3} then leads to
\begin{align}
F_3(x)
&=  n(n-1)(n-2) r   g_{\da(n)} \bigg[ \sum_{\ell=0}^{n-3} \binom{n-3}{\ell}    \bar \lambda^{\da(\ell+1)}  \bar n^{\da(n-1-\ell)} u^{n-3-\ell}   \bar \p^\ell (\p_u)^{-1-\ell}\nonumber\\
& \hspace{5cm}+ \sum_{\ell=0}^{n-3} \ell\binom{n-3}{\ell}    \bar \lambda^{\da(\ell)}  \bar n^{\da(n-\ell)} u^{n-3-\ell}   \bar \p^{\ell-1} (\p_u)^{-1-\ell} \bigg]\bar \sigma+\mathcal O(r^0)\nonumber\\
&=  n(n-1)(n-2) r   g_{\da(n)} \bigg[ \sum_{\ell=1}^{n-2} \binom{n-3}{\ell-1}    \bar \lambda^{\da(\ell)}  \bar n^{\da(n-\ell)} u^{n-2-\ell}   \bar \p^{\ell-1} (\p_u)^{-\ell}\\
& \hspace{5cm}+ \sum_{\ell=1}^{n-3} \ell\binom{n-3}{\ell}    \bar \lambda^{\da(\ell)}  \bar n^{\da(n-\ell)} u^{n-3-\ell}   \bar \p^{\ell-1} (\p_u)^{-1-\ell} \bigg]\bar \sigma+\mathcal O(r^0)\nonumber\\
&=  n(n-1)(n-2) r   g_{\da(n)} \sum_{\ell=1}^{n-2}  \bar \lambda^{\da(\ell)}  \bar n^{\da(n-\ell)} \binom{n-3}{\ell-1} \p_u\Big(    u^{n-2-\ell}   (\p_u)^{-\ell-1}  \bar \p^{\ell-1}\bar \sigma \Big)+\mathcal O(r^0)\nonumber\,,
\end{align}
where we used $\binom{n-3}{\ell} = \frac{(n-\ell-2)}{\ell}\binom{n-3}{\ell-1}$ to get the last equality. Finally, using \eqref{id4}, we end up with
\begin{align}\label{F3}
F_3(x)&=  n(n-1)(n-2) r   g_{\da(n)} \sum_{\ell=1}^{n-2}  \bar \p^{n-\ell} \Big(\bar \lambda^{\da(n)}\Big) \frac{\ell!}{n!} \binom{n-3}{\ell-1} \p_u\Big(    u^{n-2-\ell}   (\p_u)^{-\ell-1}  \bar \p^{\ell-1}\bar \sigma \Big)+\mathcal O(r^0)\nonumber\\
&= r   g_{\da(n)} \sum_{\ell=1}^{n-2}  \bar \p^{n-\ell} \Big(\bar \lambda^{\da(n)}\Big) \frac{\ell}{(n-\ell-2)!} \p_u\Big(    u^{n-2-\ell}   (\p_u)^{-\ell-1}  \bar \p^{\ell-1}\bar \sigma \Big)+\mathcal O(r^0)\,.
\end{align}

\quad We are now ready to collect all three terms $F_i(x)$ ($i=1,2,3$) and read off the action on the shear. From \eqref{F1}, \eqref{F2} and \eqref{F3}, we obtain the final expression
\begin{align}
\delta \bar \sigma&=  g_{\da(n)}\; \sum_{\ell=1}^{n} \bar \p^{n-\ell} \Big(\bar \lambda^{\da(n)}\Big)   \frac{\ell}{(n-\ell)!} \partial_u \Big(   u^{n-\ell}(\partial_u)^{1-\ell} \;  \;  \bar \p ^{\ell-1}  \bar \sigma \Big)\nonumber\\
& \hspace{0.5cm }+ 2 g_{\da(n)} \sum_{\ell=1}^{n-1}  \bar \p^{n -\ell}\Big( \bar \lambda^{\da(n)} \Big) \frac{\ell}{(n-\ell-1)!} \p_u\Big( u^{n-1-\ell}\p_u^{-\ell}  \bar \p^{\ell-1}  \bar \sigma \Big)\nonumber \\
& \hspace{0.5cm }+ g_{\da(n)} \sum_{\ell=1}^{n-2}  \bar \p^{n-\ell} \Big(\bar \lambda^{\da(n)}\Big) \frac{\ell}{(n-\ell-2)!} \p_u\Big(    u^{n-2-\ell}   (\p_u)^{-\ell-1}  \bar \p^{\ell-1}\bar \sigma \Big)\nonumber\\
& =  g_{\da(n)}\; \sum_{\ell=1}^{n} \bar \p^{n-\ell} \Big(\bar \lambda^{\da(n)}\Big)   \frac{\ell}{(n-\ell)!} \\
&\quad \times \partial_u \bigg[  \Big( u^{n-\ell}(\partial_u)^{1-\ell} \;  \;  + 2(n-\ell) u^{n-1-\ell}\p_u^{-\ell} +  (n-\ell)(n-\ell-1) u^{n-2-\ell}   (\p_u)^{-\ell-1} \Big)\bar \p ^{\ell-1}  \bar \sigma \bigg]\nonumber\\
& =  g_{\da(n)}\; \sum_{\ell=1}^{n} \bar \p^{n-\ell} \Big(\bar \lambda^{\da(n)}\Big)   \frac{\ell}{(n-\ell)!} \partial_u^3 \bigg( u^{n-\ell}(\partial_u)^{-\ell-1} \bar \p ^{\ell-1}  \bar \sigma \bigg)\nonumber.
\end{align}
We now need to remember that $g_{\da(n)}$ here is a function of $w$ and can therefore freely move through the derivatives to give
\begin{align}
\delta \bar \sigma&=   \sum_{\ell=1}^{n} \bar \p^{n-\ell} \Big(\bar \lambda^{\da(n)}\Big)   \frac{\ell}{(n-\ell)!} \partial_u^3 \bigg( u^{n-\ell}(\partial_u)^{-\ell-1} \bar \p ^{\ell-1} \Big(g_{\da(n)} \bar \sigma \Big) \bigg)\,.
\end{align}
In the above formula, the generator $g$ can now be taken to be a function of $z$ (since $\bar \sigma$ given in \eqref{proof: plane wave} is proportional to a Dirac $\delta$-function). Since, for any integer $k$, $\bar\partial^k g$ is only non-zero at $z=0$, $z=\infty$, we can write (remembering \eqref{proof: plane wave} that the plane wave only has support on $\mathcal{A}$)
\begin{align}
\delta \bar \sigma&=   \sum_{\ell=1}^{n} \bar \p^{n-\ell} \Big(g_{\da(n)} \bar \lambda^{\da(n)}\Big)   \frac{\ell}{(n-\ell)!} \partial_u^3 \Big( u^{n-\ell}(\partial_u)^{-\ell-1} \bar \p ^{\ell-1} \bar \sigma \Big)\,.
\end{align}
The formula finally extends to any Carrollian field $\bar \sigma \in \mathcal{C}_{(-\frac{1}{2}, \frac{3}{2})}^{\infty}(\mathscr{I}_{\mathcal{A}})$ by linear combinations of plane waves. This concludes the derivation the first equality of Proposition \ref{Proposition: Representation1}.

\subsection{Negative helicity}
For the opposite helicity, we consider the plane waves
\begin{align}\label{neghel}
\badat{2}
 \sigma(u, \lambda)& = \pm\frac{i\kappa }{8\pi^2} e^{\pm i\omega_0 u} \delta(\langle \lambda w \rangle)\, \in \mathcal{C}_{(\frac{3}{2}, -\frac{1}{2})}^{\infty}(\mathscr{I}_{\mathcal{A}}).
\eadat
\end{align}
The twistor representative is $\mathbf{\widetilde h} =\widetilde h\left(\mU \bla,\lambda\right) D\bar \lambda \in \Omega^{0,1}(\mathbb{PT}, \mathcal{O}(-6))$ with \eqref{Trivial journey: Twistor negative representative bis}
\begin{equation}
\widetilde{h} = \frac{\kappa \omega_0^3}{8\pi^2} e^{\pm i\omega_0 (\mU \bla)}\,\delta(\langle \lambda w \rangle)\,.
\end{equation}

\quad The action of the $w_{1+\infty}$ symmetries on $\mathbf{\widetilde h}$ is given by \eqref{Representation: action}
\begin{equation}
\badat{2}\label{deltahT}
\delta \hT=\{g, \hT\}= \left(\frac{\partial g}{\partial \mu^\da}\right)\epsilon^{\da \dot \beta}\frac{\partial \widetilde{h}}{\partial \mu^{\dot \beta}} D \bar\lambda,
 \eadat
\end{equation}
and we get, at fixed $n$,
\begin{equation}
\badat{2}
\delta \hT(\mu, \lambda)&=\frac{\pm i\kappa \omega_0^4}{8\pi^2}\left(\frac{\partial g}{\partial \mu^\da}\right)\bar \lambda^{\da}\,e^{\pm i\omega_0 (\mU \bla)}\,\delta(\langle \lambda w \rangle) D \bar\lambda\\
&=\frac{\pm i\kappa n \omega_0^4 }{8\pi^2} g_{\da_1\dots \da_n}\bar \lambda^{\da_1}\mu^{\da_2}\dots \mu^{\da_n}\,e^{\pm i\omega_0 (\mU \bla)}\,\delta(\langle \lambda w \rangle)D \bar\lambda\,.
 \eadat
\end{equation}

 \quad  We can now plug the transformed twistor representative  into expression \eqref{Psineg} leading to the Weyl tensor:
 \begin{align}
\delta \overline{\Psi}_{\alpha \dot \alpha \beta \db}(x)&=\frac{i}{2\pi }\int_{\mathbb{CP}^1} \langle \zeta d\zeta\rangle \,\zeta_\alpha \zeta_\beta \zeta_\gamma \zeta_\delta\,
\delta \mathbf {\widetilde {h}}
(\mU=x^{\alpha \da}\zeta_\alpha,\zeta_\alpha)\nonumber\\
&=\mp\frac{\kappa n \omega_0^4}{16\pi^3 }\int_{\mathbb{CP}^1} d \zeta \wedge d\bar \zeta\; \,\zeta_\alpha \zeta_\beta \zeta_\gamma \zeta_\delta\, \,
g_{\da_1\dots \da_n}\bar \zeta^{\da_1}\mu^{\da_2}\dots \mu^{\da_n}\,e^{\pm i\omega_0 (\mU \bar \zeta_{\da})}\,\delta(\langle \zeta w \rangle)\Big|_{\mU=x^{\alpha \da}w_\alpha}\nonumber\\
&=\pm\frac{i\kappa n \omega_0^4}{8\pi^3} w_\alpha w_\beta w_\gamma w_\delta\,
g_{\da_1\dots \da_n}\bar w^{\da_1}\mu^{\da_2}\dots \mu^{\da_n}\,e^{\pm i\omega_0 (\mU \bar w_{\da})}\Big|_{\mU=x^{\alpha \da}w_\alpha}\,,
\end{align}
where we integrated the $\delta$-function. This leads to
 \begin{align}
\badat{2}
\delta \overline{\Psi}_4(x)
&=\pm\frac{i\kappa n \omega_0^4}{8\pi^3} g_{\da(n)}\bar w^{\da}x^{\da(n-1)\alpha(n-1)}w_{\alpha(n-1)}
\,e^{\pm i\omega_0 ( x^{\alpha \da}  w_{ \alpha}\bar w_{\dot \alpha})}\,.
 \eadat
\end{align} 
We already computed a very similar expression in the previous section; see \eqref{513}. We can thus directly use the result \eqref{F1} and read from \eqref{psi0} that
 \begin{align}
\badat{2}
\delta \sigma &=-\p_u^{-2}\lim_{r\to \infty}r\delta \overline{\Psi}_4(x)\\
&= \partial_u^{-2}\sum_{\ell=0}^{n} \bar \p^{n-\ell} \Big(\bar \lambda^{\da(n)}\Big)   \frac{\ell}{(n-\ell)!}  \partial_u \bigg( u^{n-\ell}(\partial_u)^{1-\ell}  \bar \p ^{\ell-1} \Big( g_{\da(n)} \p_u^2 \sigma \Big) \bigg)\\
&=\sum_{\ell=0}^{n} \bar \p^{n-\ell} \Big(\bar \lambda^{\da(n)}\Big)   \frac{\ell}{(n-\ell)!} \partial_u^{-1} \bigg( u^{n-\ell}(\partial_u)^{3-\ell} \bar \p ^{\ell-1} \Big( g_{\da(n)}  \sigma \Big) \bigg)\,.
 \eadat
\end{align}
By the same argument that we used for the positive helicity case, we can replace the $w$ dependence of $g$ by a dependence on $z$ and are allowed to move it freely through the derivatives thanks to our requirement \eqref{neghel} on the support of $\sigma$, therefore
 \begin{align}
\badat{2}
\delta \sigma &=\sum_{\ell=0}^{n} \bar \p^{n-\ell} \Big(g_{\da(n)} \bar \lambda^{\da(n)}\Big)   \frac{\ell}{(n-\ell)!} \partial_u^{-1} \Big( u^{n-\ell}(\partial_u)^{3-\ell} \bar \p ^{\ell-1} \sigma \Big)\,.
 \eadat
\end{align}
The formula is then extended to any Carrollian field $\sigma \in \mathcal{C}_{(\frac{3}{2}, -\frac{1}{2})}^{\infty}(\mathscr{I}_{\mathcal{A}})$ by linear combinations, which proves the second equality of Proposition \ref{Proposition: Representation1}.

\subsection*{Acknowledgments}
This work is supported by the Simons Collaboration on Celestial Holography. Y.H. thanks Lionel Mason for several useful discussions. L.F. thanks Luca Ciambelli and Romain Ruzziconi for discussions.
L.D. is supported by the European Research Council (ERC) Project 101076737 -- CeleBH. Views and opinions expressed are however those of the author only and do not necessarily reflect those of the European Union or the European Research Council. Neither the European Union nor the granting authority can be held responsible for them.
L.D. is also partially supported by INFN Iniziativa Specifica ST\&FI.  Her research was also supported in part by the Simons Foundation through the Simons Foundation Emmy Noether Fellows Program at Perimeter Institute. L.D. and Y.H. are grateful for the hospitality of Perimeter Institute where this work was initiated. Research at Perimeter Institute is supported in part by the Government of Canada through the Department of Innovation, Science and Economic Development and by the Province of Ontario through the Ministry of Colleges and Universities.

\bibliographystyle{style}
\bibliography{references}

\providecommand{\href}[2]{#2}\begingroup\raggedright\begin{thebibliography}{10}

\bibitem{Boyer:1985aj}
C.~P. Boyer and J.~F. Plebanski, \emph{{An infinite hierarchy of conservation laws and nonlinear superposition principles for selfdual Einstein spaces}}, J. Math. Phys. {\bf 26} (1985)
229--234

\bibitem{Boyer:1983}
C.~P. Boyer, \emph{The geometry of complex self-dual Einstein spaces}, in {\em Nonlinear Phenomena}, K.~B. Wolf, ed., pp.~25--46.
\newblock Springer Berlin Heidelberg, Berlin, Heidelberg,
1983.
\newblock

\bibitem{Penrose:1976jq}
R.~Penrose, \emph{{The Nonlinear Graviton}}, Gen. Rel. Grav. {\bf 7} (1976)
171--176

\bibitem{Penrose:1976js}
R.~Penrose, \emph{{Nonlinear Gravitons and Curved Twistor Theory}}, Gen. Rel. Grav. {\bf 7} (1976)
31--52

\bibitem{Woodhouse:1987}
N.~M.~J. Woodhouse, \emph{Twistor description of the symmetries of Einstein's equations for stationary axisymmetric spacetimes}, Classical and Quantum Gravity {\bf 4} (1987), no.~4,
799

\bibitem{MASON:1988-65}
L.~Mason, S.~Chakravarty  and E.~Newman, \emph{A simple solution generation method for anti-self-dual Yang-Mills equations}, Physics Letters A {\bf 130} (1988), no.~2,
65--68

\bibitem{Mason88-Backlund}
L.~Mason, S.~Chakravarty  and E.~T. Newman, \emph{{Bäcklund transformations for the anti‐self‐dual Yang–Mills equations}}, Journal of Mathematical Physics {\bf 29} (1988), no.~4,
1005--1013

\bibitem{Woodhouse:1988}
N.~M.~J. Woodhouse and L.~J. Mason, \emph{The Geroch group and non-Hausdorff twistor spaces}, Nonlinearity {\bf 1} (1988), no.~1,
73

\bibitem{Park:1989vq}
Q.-H. Park, \emph{{Selfdual Gravity as a Large $N$ Limit of the Two-dimensional Nonlinear $\sigma$ Model}}, Phys. Lett. B {\bf 238} (1990)
287--290

\bibitem{Park:1989fz}
Q.-H. Park, \emph{{Extended Conformal Symmetries in Real Heavens}}, Phys. Lett. B {\bf 236} (1990)
429--432

\bibitem{Mason:1990}
L.~Mason, \emph{{H-space: a universal integrable system?}}, Twistor Newsletter {\bf 30} (1990)
14--17

\bibitem{Mason:1996rf}
L.~J. Mason and N.~M.~J. Woodhouse, {\em {Integrability, selfduality, and twistor theory}}.
\newblock Oxford Universtity Press,
1996

\bibitem{Dunajski:2000iq}
M.~Dunajski and L.~J. Mason, \emph{{HyperKahler hierarchies and their twistor theory}}, Commun. Math. Phys. {\bf 213} (2000) 641--672,
\href{http://www.arXiv.org/abs/math/0001008}{{\tt math/0001008}}

\bibitem{Raclariu:2021zjz}
A.-M. Raclariu, \emph{{Lectures on Celestial Holography}},
\href{http://www.arXiv.org/abs/2107.02075}{{\tt 2107.02075}}

\bibitem{Pasterski:2021rjz}
S.~Pasterski, \emph{{Lectures on celestial amplitudes}}, Eur. Phys. J. C {\bf 81} (2021), no.~12, 1062,
\href{http://www.arXiv.org/abs/2108.04801}{{\tt 2108.04801}}

\bibitem{McLoughlin:2022ljp}
T.~McLoughlin, A.~Puhm  and A.-M. Raclariu, \emph{{The SAGEX review on scattering amplitudes chapter 11: soft theorems and celestial amplitudes}}, J. Phys. A {\bf 55} (3, 2022) 443012,
\href{http://www.arXiv.org/abs/2203.13022}{{\tt 2203.13022}}

\bibitem{Donnay:2023mrd}
L.~Donnay, \emph{{Celestial holography: An asymptotic symmetry perspective}}, Phys. Rept. {\bf 1073} (10, 2024) 1--41,
\href{http://www.arXiv.org/abs/2310.12922}{{\tt 2310.12922}}

\bibitem{Donnay:2018neh}
L.~Donnay, A.~Puhm  and A.~Strominger, \emph{{Conformally Soft Photons and Gravitons}}, JHEP {\bf 01} (2019) 184,
\href{http://www.arXiv.org/abs/1810.05219}{{\tt 1810.05219}}

\bibitem{Adamo:2019ipt}
T.~Adamo, L.~Mason  and A.~Sharma, \emph{{Celestial amplitudes and conformal soft theorems}}, Class. Quant. Grav. {\bf 36} (2019), no.~20, 205018,
\href{http://www.arXiv.org/abs/1905.09224}{{\tt 1905.09224}}

\bibitem{Puhm:2019zbl}
A.~Puhm, \emph{{Conformally Soft Theorem in Gravity}}, JHEP {\bf 09} (2020) 130,
\href{http://www.arXiv.org/abs/1905.09799}{{\tt 1905.09799}}

\bibitem{Guevara:2019ypd}
A.~Guevara, \emph{{Notes on Conformal Soft Theorems and Recursion Relations in Gravity}},
\href{http://www.arXiv.org/abs/1906.07810}{{\tt 1906.07810}}

\bibitem{Strominger:2021lvk}
A.~Strominger, \emph{{w(1+infinity) and the Celestial Sphere}}, Phys. Rev. Lett. {\bf 127} (5, 2021) 221601,
\href{http://www.arXiv.org/abs/2105.14346}{{\tt 2105.14346}}

\bibitem{Guevara:2021abz}
A.~Guevara, E.~Himwich, M.~Pate  and A.~Strominger, \emph{{Holographic symmetry algebras for gauge theory and gravity}}, JHEP {\bf 11} (2021) 152,
\href{http://www.arXiv.org/abs/2103.03961}{{\tt 2103.03961}}

\bibitem{Himwich:2021dau}
E.~Himwich, M.~Pate  and K.~Singh, \emph{{Celestial Operator Product Expansions and ${\rm w}_{1+\infty}$ Symmetry for All Spins}}, JHEP {\bf 01} (8, 2021) 080,
\href{http://www.arXiv.org/abs/2108.07763}{{\tt 2108.07763}}

\bibitem{Monteiro:2022lwm}
R.~Monteiro, \emph{{Celestial chiral algebras, colour-kinematics duality and integrability}}, JHEP {\bf 01} (2023) 092,
\href{http://www.arXiv.org/abs/2208.11179}{{\tt 2208.11179}}

\bibitem{Jiang:2021csc}
H.~Jiang, \emph{{Celestial OPEs and w$_{1+\infty}$ algebra from worldsheet in string theory}}, JHEP {\bf 01} (2022) 101,
\href{http://www.arXiv.org/abs/2110.04255}{{\tt 2110.04255}}

\bibitem{Adamo:2021lrv}
T.~Adamo, L.~Mason  and A.~Sharma, \emph{{Celestial $w_{1+\infty}$ Symmetries from Twistor Space}}, SIGMA {\bf 18} (2022) 016,
\href{http://www.arXiv.org/abs/2110.06066}{{\tt 2110.06066}}

\bibitem{Ball:2021tmb}
A.~Ball, S.~A. Narayanan, J.~Salzer  and A.~Strominger, \emph{{Perturbatively exact w$_{1+\infty}$ asymptotic symmetry of quantum self-dual gravity}}, JHEP {\bf 01} (2022) 114,
\href{http://www.arXiv.org/abs/2111.10392}{{\tt 2111.10392}}

\bibitem{Mago:2021wje}
J.~Mago, L.~Ren, A.~Y. Srikant  and A.~Volovich, \emph{{Deformed $w_{1+\infty}$ Algebras in the Celestial CFT}}, SIGMA {\bf 19} (2023) 044,
\href{http://www.arXiv.org/abs/2111.11356}{{\tt 2111.11356}}

\bibitem{Ren:2022sws}
L.~Ren, M.~Spradlin, A.~Yelleshpur~Srikant  and A.~Volovich, \emph{{On effective field theories with celestial duals}}, JHEP {\bf 08} (2022) 251,
\href{http://www.arXiv.org/abs/2206.08322}{{\tt 2206.08322}}

\bibitem{Bu:2022iak}
W.~Bu, S.~Heuveline  and D.~Skinner, \emph{{Moyal deformations, W$_{1+\infty}$ and celestial holography}}, JHEP {\bf 12} (2022) 011,
\href{http://www.arXiv.org/abs/2208.13750}{{\tt 2208.13750}}

\bibitem{Monteiro:2022xwq}
R.~Monteiro, \emph{{From Moyal deformations to chiral higher-spin theories and to celestial algebras}}, JHEP {\bf 03} (2023) 062,
\href{http://www.arXiv.org/abs/2212.11266}{{\tt 2212.11266}}

\bibitem{Hu:2023geb}
Y.~Hu and S.~Pasterski, \emph{{Detector operators for celestial symmetries}}, JHEP {\bf 12} (2023) 035,
\href{http://www.arXiv.org/abs/2307.16801}{{\tt 2307.16801}}

\bibitem{Banerjee:2023zip}
S.~Banerjee, H.~Kulkarni  and P.~Paul, \emph{{An infinite family of w$_{1+\infty}$ invariant theories on the celestial sphere}}, JHEP {\bf 05} (2023) 063,
\href{http://www.arXiv.org/abs/2301.13225}{{\tt 2301.13225}}

\bibitem{Banerjee:2023jne}
S.~Banerjee, H.~Kulkarni  and P.~Paul, \emph{{Celestial OPE in Self Dual Gravity}},
\href{http://www.arXiv.org/abs/2311.06485}{{\tt 2311.06485}}

\bibitem{Bittleston:2023bzp}
R.~Bittleston, S.~Heuveline  and D.~Skinner, \emph{{The celestial chiral algebra of self-dual gravity on Eguchi-Hanson space}}, JHEP {\bf 09} (2023) 008,
\href{http://www.arXiv.org/abs/2305.09451}{{\tt 2305.09451}}

\bibitem{Krishna:2023ukw}
H.~Krishna, \emph{{Celestial gluon and graviton OPE at loop level}},
\href{http://www.arXiv.org/abs/2310.16687}{{\tt 2310.16687}}

\bibitem{Mason:2023mti}
L.~Mason, R.~Ruzziconi  and A.~Yelleshpur~Srikant, \emph{{Carrollian Amplitudes and Celestial Symmetries}},
\href{http://www.arXiv.org/abs/2312.10138}{{\tt 2312.10138}}

\bibitem{Adamo:2021bej}
T.~Adamo, L.~Mason  and A.~Sharma, \emph{{Twistor sigma models for quaternionic geometry and graviton scattering}},
\href{http://www.arXiv.org/abs/2103.16984}{{\tt 2103.16984}}

\bibitem{Mason:2022hly}
L.~Mason, \emph{{Gravity from holomorphic discs and celestial $Lw_{1+\infty }$ symmetries}}, Lett. Math. Phys. {\bf 113} (2023), no.~6, 111,
\href{http://www.arXiv.org/abs/2212.10895}{{\tt 2212.10895}}

\bibitem{Adamo:2022mev}
T.~Adamo, L.~Mason  and A.~Sharma, \emph{{Graviton scattering in self-dual radiative space-times}}, Class. Quant. Grav. {\bf 40} (2023), no.~9, 095002,
\href{http://www.arXiv.org/abs/2203.02238}{{\tt 2203.02238}}

\bibitem{Adamo:2023zeh}
T.~Adamo, W.~Bu  and B.~Zhu, \emph{{Infrared structures of scattering on self-dual radiative backgrounds}},
\href{http://www.arXiv.org/abs/2309.01810}{{\tt 2309.01810}}

\bibitem{Penrose:1972ia}
R.~Penrose and M.~A.~H. MacCallum, \emph{{Twistor theory: An Approach to the quantization of fields and space-time}}, Phys. Rept. {\bf 6} (1972)
241--316

\bibitem{eastwood_tod_1982}
M.~Eastwood and P.~Tod, \emph{Edth-a differential operator on the sphere}, Mathematical Proceedings of the Cambridge Philosophical Society {\bf 92} (1982), no.~2,
317–330

\bibitem{Newman:1976gc}
E.~T. Newman, \emph{{Heaven and Its Properties}}, Gen. Rel. Grav. {\bf 7} (1976)
107--111

\bibitem{Hspace}
M.~Ko, M.~Ludvigsen, E.~Newman  and K.~Tod, \emph{The theory of H-space}, Physics Reports {\bf 71} (1981), no.~2,
51--139

\bibitem{Freidel:2021dfs}
L.~Freidel, D.~Pranzetti  and A.-M. Raclariu, \emph{{Sub-subleading soft graviton theorem from asymptotic Einstein\textquoteright{}s equations}}, JHEP {\bf 05} (2022) 186,
\href{http://www.arXiv.org/abs/2111.15607}{{\tt 2111.15607}}

\bibitem{Freidel:2021ytz}
L.~Freidel, D.~Pranzetti  and A.-M. Raclariu, \emph{{Higher spin dynamics in gravity and $w_{1 + \infty}$ celestial symmetries}}, Phys. Rev. D {\bf 106} (12, 2021) 086013,
\href{http://www.arXiv.org/abs/2112.15573}{{\tt 2112.15573}}

\bibitem{Freidel:2023gue}
L.~Freidel, D.~Pranzetti  and A.-M. Raclariu, \emph{{On infinite symmetry algebras in Yang-Mills theory}}, JHEP {\bf 12} (2023) 009,
\href{http://www.arXiv.org/abs/2306.02373}{{\tt 2306.02373}}

\bibitem{Donnay:2022aba}
L.~Donnay, A.~Fiorucci, Y.~Herfray  and R.~Ruzziconi, \emph{{Carrollian Perspective on Celestial Holography}}, Phys. Rev. Lett. {\bf 129} (2022), no.~7, 071602,
\href{http://www.arXiv.org/abs/2202.04702}{{\tt 2202.04702}}

\bibitem{Donnay:2022wvx}
L.~Donnay, A.~Fiorucci, Y.~Herfray  and R.~Ruzziconi, \emph{{Bridging Carrollian and Celestial Holography}}, Phys. Rev. D {\bf 107} (12, 2022) 126027,
\href{http://www.arXiv.org/abs/2212.12553}{{\tt 2212.12553}}

\bibitem{Bagchi:2022emh}
A.~Bagchi, S.~Banerjee, R.~Basu  and S.~Dutta, \emph{{Scattering Amplitudes: Celestial and Carrollian}}, Phys. Rev. Lett. {\bf 128} (2022), no.~24, 241601,
\href{http://www.arXiv.org/abs/2202.08438}{{\tt 2202.08438}}

\bibitem{Barnich:2010eb}
G.~Barnich and C.~Troessaert, \emph{{Aspects of the BMS/CFT correspondence}}, JHEP {\bf 05} (2010) 062,
\href{http://www.arXiv.org/abs/1001.1541}{{\tt 1001.1541}}

\bibitem{sachs_gravitational_1961}
R.~Sachs, \emph{Gravitational waves in general relativity. {VI}. {The} outgoing radiation condition}, Proc. Roy. Soc. Lond. {\bf A264} (1961), no.~1318,
309--338

\bibitem{Newman:1961qr}
E.~Newman and R.~Penrose, \emph{{An Approach to gravitational radiation by a method of spin coefficients}}, J. Math. Phys. {\bf 3} (1962)
566--578

\bibitem{Sachs1962OnTC}
R.~K. Sachs, \emph{On the Characteristic Initial Value Problem in Gravitational Theory}, Journal of Mathematical Physics {\bf 3} (1962)
908--914

\bibitem{Bondi:1962px}
H.~Bondi, M.~G.~J. van~der Burg  and A.~W.~K. Metzner, \emph{{Gravitational waves in general relativity. 7. Waves from axisymmetric isolated systems}}, Proc. Roy. Soc. Lond. {\bf A269} (1962)
21

\bibitem{Penrose1969SolutionsOT}
R.~Penrose, \emph{Solutions of the Zero-Rest-Mass Equations}, Journal of Mathematical Physics {\bf 10} (1969)
38--39

\bibitem{Eastwood:1981jy}
M.~G. Eastwood, R.~Penrose  and R.~O. Wells, \emph{{Cohomology and Massless Fields}}, Commun. Math. Phys. {\bf 78} (1981)
305--351

\bibitem{Woodhouse:1985id}
N.~M.~J. Woodhouse, \emph{{Real methods in twistor theory}}, Class. Quant. Grav. {\bf 2} (1985)
257--291

\bibitem{Baston:1989vh}
R.~J. Baston and M.~G. Eastwood, {\em {The Penrose transform: Its interaction with representation theory}}.
\newblock Oxford University Press,
1989

\bibitem{Ward_Wells_1990}
R.~S. Ward and R.~O. Wells, {\em Twistor Geometry and Field Theory}.
\newblock Cambridge University Press,
1990

\bibitem{Adamo:2017qyl}
T.~Adamo, \emph{{Lectures on twistor theory}}, PoS {\bf Modave2017} (2018) 003,
\href{http://www.arXiv.org/abs/1712.02196}{{\tt 1712.02196}}

\bibitem{Newman:1962cia}
E.~T. Newman and T.~W.~J. Unti, \emph{{Behavior of Asymptotically Flat Empty Spaces}}, J. Math. Phys. {\bf 3} (1962), no.~5,
891

\bibitem{Newman:1968uj}
E.~T. Newman and R.~Penrose, \emph{{New conservation laws for zero rest-mass fields in asymptotically flat space-time}}, Proc. Roy. Soc. Lond. A {\bf 305} (1968)
175--204

\bibitem{Banerjee:2020zlg}
S.~Banerjee, S.~Ghosh  and P.~Paul, \emph{{MHV graviton scattering amplitudes and current algebra on the celestial sphere}}, JHEP {\bf 02} (8, 2021) 176,
\href{http://www.arXiv.org/abs/2008.04330}{{\tt 2008.04330}}

\bibitem{Adamo:2009vu}
T.~M. Adamo, C.~N. Kozameh  and E.~T. Newman, \emph{{Null Geodesic Congruences, Asymptotically Flat Space-Times and Their Physical Interpretation}}, Living Rev. Rel. {\bf 12} (2009) 6,
\href{http://www.arXiv.org/abs/0906.2155}{{\tt 0906.2155}}

\bibitem{Geiller:2022vto}
M.~Geiller and C.~Zwikel, \emph{{The partial Bondi gauge: Further enlarging the asymptotic structure of gravity}}, SciPost Phys. {\bf 13} (2022) 108,
\href{http://www.arXiv.org/abs/2205.11401}{{\tt 2205.11401}}

\bibitem{Ashtekar:1981bq}
A.~Ashtekar and M.~Streubel, \emph{{Symplectic Geometry of Radiative Modes and Conserved Quantities at Null Infinity}}, Proc. Roy. Soc. Lond. {\bf A376} (1981)
585--607

\bibitem{marle1997schouten}
C.-M. Marle, \emph{The Schouten-Nijenhuis bracket and interior products}, Journal of Geometry and Physics {\bf 23} (1997), no.~3-4,
350--359

\bibitem{Donnay:2022ijr}
L.~Donnay, E.~Esmaeili  and C.~Heissenberg, \emph{{p-forms on the celestial sphere}}, SciPost Phys. {\bf 15} (2023), no.~1, 026,
\href{http://www.arXiv.org/abs/2212.03060}{{\tt 2212.03060}}

\end{thebibliography}\endgroup

\end{document}